\newtheorem{theorem}{Theorem}[section]
\newtheorem{lemma}[theorem]{Lemma}
\newtheorem{corollary}[theorem]{Corollary}
\newtheorem{definition}[theorem]{Definition}
\newcommand{\OPT}{\mathsf{OPT}}
\renewcommand{\eg}{\mathsf{eg}}
\newcommand{\dG}{\vec{G}}
\newcommand{\uG}{G}
\renewcommand{\phi}{\varphi}
\title{A near-optimal approximation algorithm for Asymmetric TSP on embedded graphs}
\author{
Jeff Erickson\thanks{Department of Computer Science, University
of Illinois at Urbana-Champaign; \href{http://www.cs.uiuc.edu/~jeffe};  \url{jeffe@illinois.edu}.  Supported in part by NSF grant CCF-0915519.} 
\and
Anastasios Sidiropoulos\thanks{Department of Computer Science, University
of Illinois at Urbana-Champaign; \url{http://www.sidiropoulos.org/}; \url{sidiropo@illinois.edu}.  Supported in part by David and Lucille Packard Fellowship and by NSF grants CCF-0915984 and CCF-0915519.}
}
\date{}
\begin{document}

\setcounter{page}{0}
\maketitle

\begin{abstract}
We present a near-optimal polynomial-time approximation algorithm for the asymmetric traveling salesman problem for graphs of bounded orientable or non-orientable genus.  Our algorithm achieves an approximation factor of $O(f(g))$ on graphs with genus $g$, where $f(n)$ is the best approximation factor achievable in polynomial time on arbitrary $n$-vertex graphs.  In particular, the $O(\log n / \log \log n)$-approximation algorithm for general graphs by Asadpour~\etal\ [SODA~2010] %this got the best-paper award in SODA, not STOC
immediately implies an $O(\log g / \log \log g)$-approximation algorithm for genus-$g$ graphs.  Our result improves the $O(\sqrt{g} \log g)$-approximation algorithm of Oveis Grahan and Saberi [SODA 2011], which applies only to graphs with \emph{orientable} genus $g$.
We remark that our result gives the first algorithm that works even in the non-orientable case.

Moreover, using recent progress on the problem of approximating the genus of a graph, our $O(\log g / \log\log g)$-approximation can be implemented even without an embedding, for the case of bounded-degree graphs.
In contrast, obtaining a $O(\sqrt{g} \log g)$-approximation via the algorithm of Oveis Grahan and Saberi requires a genus-$g$ embedding as part of the input.

Finally, our techniques lead to a $O(1)$-approximation algorithm for ATSP on graphs of genus $g$, with running time $2^{O(g)} \cdot n^{O(1)}$.
\end{abstract}

\thispagestyle{empty}
\newpage

\section{Introduction}

The Asymmetric Traveling Salesman Problem is one of the most fundamental, and well studied problems in combinatorial optimization.
An instance of ATSP consists of a pair $(\dG,c)$, where $\dG=(V,A)$ is a directed graph, and $c$ is a (not necessarily symmetric) cost function $c:A \to \mathbb{R}^{+}$.
We may assume w.l.o.g.~that the cost of any arc $a\in A$ is equal to the shortest-path distance between its endpoints.
The goal is to find a spanning closed walk of minimum cost.

Recently, Asadpour \etal~\cite{DBLP:conf/soda/AsadpourGMGS10} gave an $O(\log n / \log \log n)$-approximation algorithm for ATSP, improving the nearly 30 year old $O(\log n)$-approximating of Frieze \etal~\cite{DBLP:journals/networks/FriezeGM82}. Building on this breakthrough, Oveis Grahan, and Saberi \cite{DBLP:conf/soda/GharanS11} gave a $O(\sqrt{g} \log g)$-approximation for graphs of orientable genus $g$.
We refer the reader to \cite{DBLP:conf/soda/AsadpourGMGS10, DBLP:conf/soda/GharanS11}, and the references therein, for a more detailed overview of the rich history of the problem.

We obtain an improved approximation algorithm for graph of genus $g$, by generalizing the upper bound of \cite{DBLP:conf/soda/AsadpourGMGS10} for general graphs. More precisely, we show that an $f(n)$-approximation for ATSP on general $n$-vertex graphs, implies an $O(f(g))$-aproximation for ATSP on graphs of either orientable, or non-orientable genus $g$.
The following summarizes our main result.

\begin{theorem}\label{thm:main}
If there exists a polynomial-time $f(n)$-approximation for ATSP on $n$-vertex graphs, then there exists polynomial-time $O(f(g))$-approximation for ATSP on graphs embedded into a surface of either orientable, or non-orientable genus $g$.
\end{theorem}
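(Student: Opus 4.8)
\emph{Step 1: the thin-tree reduction.} The plan is to solve the Held--Karp relaxation for $(\dG,c)$ in polynomial time, obtaining a fractional circulation $x^{*}$ with $x^{*}(\delta^{+}(S))=x^{*}(\delta^{-}(S))\ge 1$ for every $\emptyset\neq S\subsetneq V$. Symmetrizing and slightly scaling $x^{*}$ gives a point $z$ in the spanning-tree polytope of the underlying graph $\uG$, and by the argument of Asadpour~\etal~\cite{DBLP:conf/soda/AsadpourGMGS10} it then suffices to produce a spanning tree $T$ of $\uG$ that is \emph{$\alpha$-thin} with respect to $z$, i.e.\ $|T\cap\delta_{\uG}(S)|\le \alpha\cdot z(\delta_{\uG}(S))$ for every $S$: any such tree can be oriented and Eulerian-augmented against $x^{*}$ by a flow computation at cost $O(\alpha)\cdot\OPT$. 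Thus the whole problem reduces to constructing an $O(f(g))$-thin tree in polynomial time. (One could equivalently phrase the reduction purely in terms of ATSP tours on a small instance; I use the thin-tree version, since it is cleanest.)

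\emph{Step 2: topological shrinking to an $O(g)$-vertex instance.} Fix a cellular embedding of $\uG$ on a surface of orientable or non-orientable genus $g$. A tree--cotree decomposition (equivalently, a cut graph or a system of loops) exhibits $O(g)$ ``topologically essential'' edges carrying all of the non-planarity, so the embedding carves $\uG$ into $O(g)$ ``planar pieces'' glued along a skeleton of complexity $O(g)$. I would make this quantitative by exhibiting a forest $F\subseteq\uG$ --- roughly a spanning forest of the planar pieces, chosen with the embedding and $z$ in mind --- such that: (a) $\uG/F$ has only $N=O(g)$ vertices; (b) the image of $z$ is feasible for the spanning-tree polytope of $\uG/F$ (after the routine rescaling); and (c) every cut of $\uG$ that the contraction ``breaks'' is already dominated, up to an $O(1)$ factor and up to its $z$-weight, by $F$ alone. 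Granting this, I would apply the thin-tree construction underlying the $f(n)$-approximation (for the Asadpour~\etal\ algorithm, the maximum-entropy sampling step) to the $N$-vertex graph $\uG/F$, obtaining an $f(N)=O(f(g))$-thin tree $T'$, lift it to $T=F\cup T'$, verify that $T$ is $O(f(g))$-thin in $\uG$ using (b) and (c), and finish with Step~1. Running the same reduction with brute-force enumeration over the spanning trees of $\uG/F$ --- of which there are $2^{O(g)}$ up to polynomial factors --- in place of the $f(n)$-algorithm yields the $O(1)$-approximation in time $2^{O(g)}\cdot n^{O(1)}$.

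\emph{The main obstacle.} Steps~1 and~3 are routine once the skeleton is in hand; all of the content --- and the only place the genus bound is used --- is item (c) of Step~2. A near-spanning forest can cross a tiny cut arbitrarily many times, so (c) is false for an arbitrary contraction and must be \emph{forced} by the embedding: one has to carve the planar pieces so that the boundary of each piece dominates, up to a constant and up to the Held--Karp weight, every cut internal to that piece, while simultaneously keeping the number of pieces at $O(g)$ and keeping the projected LP solution feasible. Proving that a genus-$g$ embedded graph admits such a decomposition --- essentially, that its non-planarity is confined to an $O(g)$-size core whose complementary planar parts can be contracted without damaging thinness --- is the crux, and even the planar case of this statement is already the delicate point, where the interplay between the surface topology and the cut structure of the Held--Karp solution has to be exploited.
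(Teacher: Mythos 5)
You have correctly identified the overall shape of the argument---use the embedding to contract down to an $O(g)$-vertex core and invoke the general algorithm there---but the proposal has a genuine gap exactly where you flag it: item (c) of Step~2, the existence of a forest $F$ with $O(g)$ tree components that is itself $O(1)$-thin with respect to $z$. This is not a ``delicate point to be exploited''; it is the entire technical content of the theorem, and your proposal contains no mechanism for producing such an $F$. The paper's construction is the notion of a \emph{ribbon}: a maximal set of pairwise homotopic parallel edges. An Euler-characteristic count shows that as long as the current graph has more than $g$ vertices, some vertex meets at most $5$ ribbons, so some ribbon carries $z$-weight at least $2/5$; the algorithm repeatedly contracts the heaviest ribbon and records one \emph{central} (weighted-median) edge of it. Thinness of the resulting forest is then proved in the dual: each contracted ribbon is a degree-$2$ dual path of weight $\ge 2/5$ with the chosen edge in its weighted middle, so any dual cycle (i.e.\ any cut) crossing $k$ chosen edges must have $z$-weight $\Omega(k)$. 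A separate step (borrowed from Oveis Gharan--Saberi, iterating the construction $n^2/\alpha$ times with decremented weights) is needed to also control the \emph{cost} of the forest, which your proposal does not address at all.

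A second, structural divergence: the theorem assumes only a black-box $f(n)$-approximation for ATSP, whereas your Step~2 assumes the general algorithm is a \emph{thin-tree construction} that can be run on the spanning-tree polytope of $\uG/F$ and lifted. The paper avoids this by never building a single thin tree in $\uG$. Instead it converts the $\le g$-component thin forest into $\le g$ closed walks covering all vertices at cost $O(\OPT)$ (via a modified Hoffman circulation argument applied to the forest, not the tree), picks one representative vertex per walk, forms a genuine $g$-vertex ATSP instance on shortest-path distances, runs the given $f(n)$-approximation on \emph{that} instance, and merges the resulting tour with the walks by shortcutting as in Frieze et al. This keeps the reduction black-box and also immediately yields the $2^{O(g)}\cdot n^{O(1)}$-time $O(1)$-approximation by solving the small instance exactly; your proposed brute force over spanning trees of $\uG/F$ would instead inherit the unproved lifting step. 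Even granting your item (c), your lifted tree $T=F\cup T'$ is not obviously thin: a cut $\delta_{\uG}(S)$ that separates endpoints of $F$-edges does not correspond to any cut of $\uG/F$, so the thinness of $T'$ in the contracted graph says nothing about $|T'\cap\delta_{\uG}(S)|$ for such $S$.
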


Combined with the $O(\log n / \log \log n)$-approximation of \cite{DBLP:conf/soda/AsadpourGMGS10} for $n$-vertex graphs, we immediately obtain the following corollary.

\begin{theorem}\label{thm:loggloglogg}
There exists a polynomial-time $O(\log g / \log \log g)$-approximation algorithm for ATSP, on graphs embedded into a surface of either orientable, or non-orientable genus $g$.
\end{theorem}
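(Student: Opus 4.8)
The plan is as follows.  Theorem~\ref{thm:loggloglogg} is an immediate consequence of Theorem~\ref{thm:main}: Asadpour~\etal~\cite{DBLP:conf/soda/AsadpourGMGS10} give a polynomial-time $f(n)$-approximation for ATSP on $n$-vertex graphs with $f(n)=O(\log n/\log\log n)$, and feeding this $f$ into Theorem~\ref{thm:main} yields a polynomial-time algorithm with ratio $O(f(g))=O(\log g/\log\log g)$ on graphs of orientable or non-orientable genus $g$.  (Because this $f$ is monotone and sub-polynomial, even if the reduction of Theorem~\ref{thm:main} internally invokes the black box on an instance of size $\mathrm{poly}(g)$ rather than $g$, the resulting bound $f(\mathrm{poly}(g))=\Theta(\log g/\log\log g)$ is unaffected.)  So the real work is Theorem~\ref{thm:main}, and the rest of this proposal describes how I would establish that reduction.

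For Theorem~\ref{thm:main} I would aim to transform, in polynomial time, a given ATSP instance $(\dG,c)$ whose underlying graph has genus $g$ into an ATSP instance $(\dH,c')$ on only $O(g)$ vertices, so that (i) $\OPT(\dH,c')\le O(1)\cdot\OPT(\dG,c)$, and (ii) any closed walk in $\dH$ of cost $C$ can be converted, in polynomial time, into a \emph{spanning} closed walk of $\dG$ of cost $C+O(1)\cdot\OPT(\dG,c)$.  Given such a reduction, running the assumed $f$-approximation on $(\dH,c')$ and lifting produces a spanning closed walk of $\dG$ of cost $O(f(O(g)))\cdot\OPT+O(1)\cdot\OPT=O(f(g))\cdot\OPT$; and solving $(\dH,c')$ exactly instead, by a Held--Karp dynamic program over its $O(g)$ vertices, gives an $O(1)$-approximation in time $2^{O(g)}\,n^{O(1)}$ --- matching the two guarantees claimed in the abstract.

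To construct the reduction I would first solve the Held--Karp LP relaxation of $(\dG,c)$, obtaining a fractional circulation $x^\ast$ with $c(x^\ast)\le\OPT$; this certifies a strongly connected ``backbone'' along which re-connection detours can be charged against $\OPT$.  Using the genus-$g$ embedding I would then split $\dG$ into a topologically essential part --- captured by a cut graph, a system of $O(g)$ loops, or equivalently the $O(g)$ edges outside a tree--cotree decomposition --- and a family of topologically trivial, essentially planar pieces attached to the backbone; the planar pieces would be serviced by short local detours whose total cost is $O(1)$ per unit of LP flow, hence $O(\OPT)$, while the essential part would be contracted to the $O(g)$-vertex instance $\dH$, with the same decomposition run in reverse to lift a tour of $\dH$ to a spanning tour of $\dG$.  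The main obstacle is exactly this step: showing that the topologically trivial bulk of a bounded-genus graph can be collapsed to an $O(g)$-size core while both $\OPT$ and the lifting cost remain within a constant factor of $\OPT$ --- and doing so in the \emph{asymmetric} setting, where LP duality and ``doubling'' arguments are far more delicate than for symmetric TSP, and uniformly for non-orientable as well as orientable surfaces, which is precisely what the thin-tree approach behind the earlier $O(\sqrt{g}\log g)$ bound of Oveis Gharan and Saberi does not provide.
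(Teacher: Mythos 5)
Your derivation of Theorem~\ref{thm:loggloglogg} itself is exactly the paper's: the result is stated as an immediate corollary of Theorem~\ref{thm:main} combined with the $O(\log n/\log\log n)$-approximation of Asadpour et al., and your first paragraph (including the harmless remark about $f(\mathrm{poly}(g))$) is correct. Your two-part reduction template for Theorem~\ref{thm:main} --- build an $O(g)$-vertex instance whose optimum is at most $\OPT$, together with a lifting gadget of cost $O(\OPT)$ --- also matches the paper's architecture: there the gadget is a set of $k\le g$ closed walks covering all vertices with total cost $O(\OPT)$, the small instance lives on one representative vertex per walk, and lifting is concatenation plus shortcutting as in Frieze et al.

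The gap is in how you propose to build that gadget, and it is exactly the step you flag as ``the main obstacle'': you do not actually show that the ``topologically trivial bulk'' can be serviced by detours of total cost $O(\OPT)$, and the cut-graph/tree--cotree route you gesture at is not how the paper does it (nor is it clear it can be made to work in the asymmetric setting, for the reasons you yourself cite). The paper's construction is different and is where all the content lies: it decomposes the non-loop edges into \emph{ribbons} (maximal bundles of pairwise-homotopic parallel edges), repeatedly contracts the ribbon of largest LP weight until at most $g$ vertices remain --- an Euler-formula count shows each contracted ribbon has weight at least $2/5$ --- and takes the central edge of each contracted ribbon to obtain a spanning forest with at most $g$ components that cuts every cut $U$ at most $O(1)\cdot z(\delta(U))$ times (argued in the dual, where each ribbon becomes a path of degree-$2$ vertices). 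A reweighting argument in the style of Oveis Gharan and Saberi then upgrades this to a genuinely $(O(1),O(1))$-thin forest (Lemma~\ref{lem:thin2}), and Hoffman's circulation theorem converts the thin forest into the $k\le g$ covering walks of cost $O(\OPT)$ (Lemma~\ref{lem:circulation}). Without some substitute for the thin-forest construction, your sketch establishes the easy corollary but not the theorem it rests on.
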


We remark that the result from \cite{DBLP:conf/soda/GharanS11} works only on the orientable case, while our result works on both orientable, and non-orientable surfaces.
Our algorithm implies therefore the first constant-factor approximation for graphs of bounded non-orientable genus.

Recently, Chekuri and Sidiropoulos \cite{chandra_sid_genus} have obtained a polynomial-time algorithm which given a graph $G$ of bounded maximum degree, outputs an embedding of $G$ into a surface of orientable (resp.~non-orientable) genus $g^{\alpha}$, where $g$ is the orientable (resp.~non-orientable) genus of $G$.
Combining this result with Theorem \ref{thm:main}, we immediately obtain an $O(\log g / \log \log g)$-approximation algorithm for ATSP on bounded-degree graphs of genus $g$, even when a embedding of the input graph is not given as part of the input.
This is summarized in the following.

\begin{theorem}\label{thm:main_no_embedding}
There exists a polynomial-time $O(\log g / \log \log g)$-approximation algorithm for ATSP, on graphs of either orientable, or non-orientable genus $g$, even when an embedding of the graph is not given as part of the input.
\end{theorem}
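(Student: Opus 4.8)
The plan is to obtain Theorem~\ref{thm:main_no_embedding} as a black-box composition of three ingredients already available to us: the reduction of Theorem~\ref{thm:main}, the $O(\log n/\log\log n)$-approximation of Asadpour~\etal\ for general graphs, and the genus-approximation algorithm of Chekuri and Sidiropoulos for graphs of bounded maximum degree. The only reason Theorem~\ref{thm:main} does not already settle the question is that it requires a genus-$g$ embedding of the input graph as part of the input; so the entire content of the proof is to manufacture a sufficiently good embedding first, and then invoke Theorem~\ref{thm:main} on it. (As in the discussion preceding the statement, the input graph is implicitly assumed to have bounded maximum degree, since this is the regime in which an embedding can currently be computed without blowing up the genus by more than a polynomial factor; this hypothesis is inherited from the genus-approximation step and cannot presently be removed.)

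First I would run the Chekuri--Sidiropoulos algorithm on the input graph $G$. If $G$ has orientable (resp.\ non-orientable) genus $g$, this returns, in polynomial time, an embedding of $G$ into a surface of orientable (resp.\ non-orientable) genus $g' \le g^{\alpha}$ for some absolute constant $\alpha$. Note that we never need to know $g$ itself: we simply take whatever embedding the algorithm produces, of some genus $g'$, carrying the guarantee $g' \le g^{\alpha}$.

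Next I would feed the pair $(G,\text{embedding})$, together with the Asadpour~\etal\ algorithm playing the role of the hypothesized $f(n)$-approximation with $f(n)=O(\log n/\log\log n)$, into Theorem~\ref{thm:main}. This yields, in polynomial time, a spanning closed walk of cost $O(f(g'))\cdot\OPT = O\!\left(\tfrac{\log g'}{\log\log g'}\right)\cdot\OPT$. The remaining step is the elementary estimate $f(g^{\alpha})=O(f(g))$ for $f(n)=\log n/\log\log n$: indeed $\tfrac{\log(g^{\alpha})}{\log\log(g^{\alpha})} = \tfrac{\alpha\log g}{\log(\alpha\log g)} = O\!\left(\tfrac{\log g}{\log\log g}\right)$ since $\alpha$ is constant, and for $g$ below a fixed constant the whole bound is $O(1)$, absorbed into the hidden constant. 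Hence the returned tour is an $O(\log g/\log\log g)$-approximation, as claimed.

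I expect essentially no technical obstacle: the argument is a one-line reduction once Theorem~\ref{thm:main} and the external genus-approximation result are in hand. The only points requiring care are (i) recording the bounded-degree hypothesis noted above; (ii) checking that the Chekuri--Sidiropoulos algorithm covers the non-orientable case (it does, via its ``resp.''\ guarantee), so that Theorem~\ref{thm:main_no_embedding} holds uniformly for both surface types; and (iii) observing that the guarantee of Theorem~\ref{thm:main} is phrased in terms of the genus of the surface into which $G$ is \emph{actually} embedded, so that supplying a genus-$g^{\alpha}$ embedding legitimately produces the factor $O(f(g^{\alpha}))$, and only then is the estimate $f(g^{\alpha})=O(f(g))$ invoked to conclude.
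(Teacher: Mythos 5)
Your proposal is correct and is exactly the argument the paper intends: run the Chekuri--Sidiropoulos genus-approximation algorithm to obtain an embedding of genus $g' \le g^{\alpha}$, invoke Theorem~\ref{thm:main} with the Asadpour~\etal\ algorithm on that embedding, and absorb the polynomial genus blowup via $f(g^{\alpha}) = O(f(g))$ for $f(n) = \log n / \log\log n$. Your observation that the bounded-degree hypothesis is inherited from the genus-approximation step (and is stated in the surrounding discussion but omitted from the theorem statement itself) is accurate and worth keeping.
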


In contrast, if we apply the same approach to the $O(\sqrt{g} \log g)$-approximation algorithm of Oveis Grahan and Saberi, the resulting approximation guarantee deteriorates to $O(g^{\beta})$, for some constant $\beta > 6$.

Finally, our approximation guarantee can be improved to a constant, in time exponential in the genus of the input graph.

\begin{theorem}\label{thm:FPT}
There exists a $O(1)$-approximation algorithm for ATSP, on graphs embeddable into a surface of either orientable, or non-orientable genus $g$, with running time $2^{O(g)} \cdot n^{O(1)}$.
Moreover, this algorithm does not require a drawing of the graph as part of the input.
\end{theorem}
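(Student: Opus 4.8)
The plan is to reopen the proof of Theorem~\ref{thm:main} and replace the approximation oracle it invokes by an \emph{exact} ATSP solver run on a small instance. Inspecting that proof, the reduction it gives takes an ATSP instance on a genus-$g$ graph $\uG$ and, in polynomial time, produces an auxiliary ATSP instance on a graph with only $O(g)$ vertices, together with a polynomial-time lifting map that converts any tour of the auxiliary instance into a tour of $\uG$ of cost at most a constant factor larger; moreover it is approximation-preserving in both directions, so the optimum of the auxiliary instance is at most a constant times $\OPT(\uG)$, and the assumed $f(n)$-approximation is used only as a black box on the $O(g)$-vertex instance --- which is exactly why the guarantee of Theorem~\ref{thm:main} is $O(f(g))$ rather than $O(f(n))$.

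Given this, the first step is to solve the auxiliary instance \emph{optimally} instead of approximately. On an $m$-vertex ATSP instance the Held--Karp dynamic program computes an optimal tour in $O(2^m m^2)$ time, and the same subset dynamic program handles, with no change in the exponent, the mild variants that may show up in the reduction (a distinguished set of vertices that must be visited with the rest optional, or a fixed basepoint). Running it on the $O(g)$-vertex auxiliary instance costs $2^{O(g)}$ time; computing shortest-path distances, building the auxiliary instance, and lifting its optimal tour back to $\uG$ all cost $n^{O(1)}$. Since the auxiliary instance is solved exactly and lifting loses only a constant factor, the algorithm outputs a tour of $\uG$ of cost $O(1)\cdot\OPT(\uG)$ in total time $2^{O(g)}\cdot n^{O(1)}$.

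The second step removes the embedding from the input, which enters only when building the auxiliary instance, so it suffices to obtain \emph{some} genus-$g$ embedding of $\uG$. One route is to try $g'=0,1,2,\dots$ and, for each, run a fixed-parameter algorithm that either embeds $\uG$ in the orientable (resp.\ non-orientable) surface of genus $g'$ or certifies that no such embedding exists, stopping at the first $g'$ that succeeds; if this subroutine runs in $2^{O(g')}n^{O(1)}$ time, the total over $g'\le g$ stays $2^{O(g)}n^{O(1)}$, and in contrast with Theorem~\ref{thm:main_no_embedding} no degree bound is needed since exact embedding is available for arbitrary graphs. A self-contained alternative, if the auxiliary instance depends on the embedding only through an $O(g)$-size piece of combinatorial data, is to enumerate all $2^{O(g)}$ possibilities for that data, solve the corresponding auxiliary instance exactly for each, lift each to a (always valid, possibly expensive) tour of $\uG$, and return the cheapest: one of the guesses comes from a genuine genus-$g$ embedding, so the minimum is $O(1)\cdot\OPT(\uG)$, and the procedure is manifestly $2^{O(g)}n^{O(1)}$.

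The main obstacle is not in this plan but in verifying the structural facts borrowed from the proof of Theorem~\ref{thm:main}: that the auxiliary instance has size \emph{linear} in $g$ --- if it were only $g^{O(1)}$ the running time would weaken to $2^{g^{O(1)}}n^{O(1)}$ --- that the $f(n)$-oracle really is used as a black box so that an exact substitute is sound, and, for the self-contained variant, that the auxiliary instance can be reconstructed from an $O(g)$-bit description without a global embedding in hand. I expect all of these to hold essentially by inspection of that proof; a secondary and routine point is to pin down the precise (restricted) form of ATSP that the auxiliary instance takes and confirm that the Held--Karp recursion covers it.
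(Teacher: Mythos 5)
Your proposal is correct and matches the paper's own proof: the paper likewise reruns the reduction of Theorem~\ref{thm:main}, solves the auxiliary instance $\dG'$ (which has $k\leq g$ vertices, so the size really is linear in $g$) exactly via the folklore $2^{O(g)}n^{O(1)}$ dynamic program, and obtains the embedding by invoking the fixed-parameter embedding algorithm of Kawarabayashi, Mohar, and Reed, which runs in $2^{O(g)}\cdot n$ time. The structural facts you flagged for verification all hold by inspection, and your ``self-contained enumeration'' alternative is unnecessary.
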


In other words, the above theorem shows that obtaining a constant-factor approximation for ATSP if fixed-parameter tractable, when parameterized by the genus of the input graph.
We note that since the running time of our algorithm is $2^{O(g)} \cdot n^{O(1)}$, we may assume that a drawing of the input graph into a surface of (either orientable, or non-orientable) genus $g$ is given.
This is because it has been shown by Kawarabayashi, Mohar, and Reed \cite{KawarabayashiMR08} that such a drawing can be found in time $2^{O(g)} \cdot n$.
Our result implies in particular the following. 

\begin{corollary}\label{thm:FPT}
There exists a polynomiam-time $O(1)$-approximation algorithm for ATSP, on graphs embeddable into a surface of either orientable, or non-orientable genus $O(\log n)$.
Moreover, this algorithm does not require a drawing of the graph as part of the input.
\end{corollary}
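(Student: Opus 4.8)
The plan is to obtain this corollary as a direct specialization of the fixed-parameter algorithm of Theorem~\ref{thm:FPT} (the one with running time $2^{O(g)}\cdot n^{O(1)}$). First I would observe that if the input graph is promised to be embeddable into a surface of orientable or non-orientable genus $g = O(\log n)$, then $2^{O(g)} = 2^{O(\log n)} = n^{O(1)}$, so the running time $2^{O(g)}\cdot n^{O(1)}$ collapses to $n^{O(1)}$; the approximation guarantee is inherited verbatim, and since the algorithm of Theorem~\ref{thm:FPT} does not take a drawing of the graph as part of its input, neither does the specialized algorithm. That already gives the statement, modulo one bookkeeping point discussed next.

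The one point that needs a sentence of care is that the algorithm of Theorem~\ref{thm:FPT} is nominally parameterized by $g$, so we must feed it a value. If the constant hidden in ``genus $O(\log n)$'' is known, I would simply pass $g = \lceil c\log n\rceil$ as an upper bound on the true genus; the guarantee of Theorem~\ref{thm:FPT} only needs the actual genus to be at most the supplied parameter, so this suffices, at the cost of a polynomial whose degree depends on $c$ (and on the constant in the exponent of the running time of Theorem~\ref{thm:FPT}). If one prefers not to assume $c$ is known in advance, I would instead guess the genus bound by successive doubling $g = 1,2,4,8,\dots$, and for each guess run the $2^{O(g)}\cdot n$-time algorithm of Kawarabayashi, Mohar, and Reed~\cite{KawarabayashiMR08}, which either certifies that the genus exceeds the current guess or produces an embedding of that genus; we stop at the first success, which occurs once the guess reaches the true genus $g^* = O(\log n)$, for a total cost of $\sum_i 2^{O(2^i)}\cdot n = 2^{O(g^*)}\cdot n = n^{O(1)}$, and then invoke Theorem~\ref{thm:FPT}.

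I do not expect any genuine obstacle here: the argument is essentially the arithmetic identity $2^{O(\log n)} = n^{O(1)}$ together with the observation that both the $O(1)$ approximation factor and the ``no drawing required'' clause of Theorem~\ref{thm:FPT} pass through unchanged. The only mildly delicate part, as noted above, is making explicit that the degree of the resulting polynomial running time depends on the multiplicative interaction between the constant in $g = O(\log n)$ and the constant in the exponent of the FPT running time; this is harmless for the qualitative claim but should be stated so the reader is not surprised that the degree is not an absolute constant.
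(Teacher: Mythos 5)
Your proposal is correct and matches the paper's (implicit) argument exactly: the corollary is obtained by specializing Theorem~\ref{thm:FPT} to $g = O(\log n)$, noting that $2^{O(\log n)} = n^{O(1)}$, and using the Kawarabayashi--Mohar--Reed embedding algorithm so that no drawing is required as input. Your extra care about guessing the genus by successive doubling is a reasonable bookkeeping refinement but does not change the substance.
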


\subsection{A high-level description of the algorithm}
\paragraph{Previous work on rounding the Held-Karp LP.}
Our algorithm uses the Held-Karp LP relaxation of ATSP, and 
it is inspired by, and uses ideas from \cite{DBLP:conf/soda/AsadpourGMGS10,DBLP:conf/soda/GharanS11}.
We begin our description by recalling some key steps from these algorithms.
%Both of the algorithms in \cite{DBLP:conf/soda/AsadpourGMGS10, DBLP:conf/soda/GharanS11} are based on the same relaxation.
Intuitively, a feasible solution to the Held-Karp LP assigns a weight to every arc, so that the total weight crossing every cut is at least 1 (in every direction), and the total weight entering/leaving every vertex is 1 (see Section \ref{sec:LP} for a formal definition).
The main tool used in \cite{DBLP:conf/soda/AsadpourGMGS10,DBLP:conf/soda/GharanS11} for rounding such a fractional solution is the notion of a \emph{thin spanning tree}.
Roughly speaking, a tree $T$ is $(\alpha,s)$-thin, for some $\alpha,s>0$, if (i) for every cut, the total number of edges in $T$ crossing the cut is at most $\alpha$ times the total fractional cost of the cut, and (ii) the total cost of the tree is at most $s$ times the cost of the fractional solution.
Given such an $(\alpha,s)$-thin spanning tree, one can obtain a tour of total cost $O((\alpha+s) \cdot \OPT)$, where $\OPT$ is the cost of the minimum TSP tour, via a careful application of Hoffman's circulation theorem.

\paragraph{The forest for the trees.}
We depart from the above paradigm by using \emph{thin spanning forests} instead of thin spanning trees.
Given a graph of genus $g$, we show how to construct a $(O(1), O(1))$-thin spanning forest with at most $g$ connected components.
Using a modified application of Hoffman's circulation theorem, we can transform this thin forest into a collection of $g$ closed walks $W_1,\ldots,W_g$, visiting all vertices in the graph, and having total cost $O(\OPT)$.
We then apply an idea from \cite{DBLP:journals/networks/FriezeGM82} to ``merge'' these walks into a single one.
More specifically, we pick an arbitrary vertex from every walk, and we form a smaller ATSP instance on a graph with $g$ vertices.
We solve this smaller instance using the algorithm for general graphs, obtaining a tour $C$.
Finally, we produce the solution to our problem by merging $C$ with $W_1,\ldots,W_g$, and shortcutting the resulting walk.

\paragraph{How can we find a thin forest?}
The main technical part of our algorithm is the computation of a thin spanning forest with a small number of connected components.
This is done by introducing a certain structure that we call a \emph{ribbon}.
Intuitively, this is a maximal set of parallel edges that are contained inside a disk in the surface.
We show that unless the graph has at most $g$ vertices, we can find such a ribbon having large total fractional cost.
We then \emph{contract} all the edges in this ribbon, and repeat until we arrive at a graph with at most $g$ vertices.
Every vertex in the contracted graph corresponds to a connected component of ribbons in the original graph.
The fact that every ribbon has large fractional cost allows us to find a spanning tree in every such component, so that the resulting spanning forest is thin.

\subsection{Preliminaries}

%\paragraph{Surfaces.}
We give a brief overview of some of the notation that we will use.
For a more detailed exposition we refer the reader to  \cite{mohar2001graphs}.
A \emph{surface} is a compact 2-dimensional manifold; it is called \emph{orientable} if it is embeddable in $\mathbb{R}^3$, and \emph{non-orientable} otherwise.
An \emph{embedding} of an undirected graph $G$ in a surface ${\cal S}$ is a continuous injective function from the graph (as a topological space) to ${\cal S}$.  Vertices of $G$ are mapped to distinct points in ${\cal S}$; edges are mapped to simple, interior-disjoint paths.  A face of an embedding is a component of the complement of the image of the embedding.  Without loss of generality, we consider only cellular embeddings, meaning every face is homeomorphic to an open disk.  To avoid excessive notation, we do not distinguish between vertices, edges, and subgraphs of $G$ and their images under the embedding.  A bigon is a face bounded by exactly two edges.

A cycle is \emph{contractible} if it can be continuously deformed to a point; classical results of Epstein \cite{e-c2mi-66} imply that a simple cycle in ${\cal S}$ is contractible if and only if it is the boundary of a disk in ${\cal S}$.  Two paths with matching endpoints are \emph{homotopic} if they form a contractible cycle.  Thus, two edges in an embedded graph are homotopic if and only if they bound a bigon.

The \emph{dual} $G^*$ of an embedded graph $G$ is defined as follows.  For each face $f$ of $G$, the dual graph has a corresponding vertex $f^*$.  For each edge $e$ of $G$, the dual graph has an edge $e^*$ connecting the vertices dual to the two faces on either side of $e$.  Intuitively, one can think of $f^*$ as an arbitrary point in the interior of $f$ and $e^*$ has a path that crosses $e$ at its midpoint and does not intersect any other edge of $G$.  Each face of $G^*$ corresponds to a vertex of $G$; thus, the dual of $G^*$ is isomorphic to the original embedded graph $G$.  Trivially, a face $f$ of $G$ is a bigon if and only if its dual vertex $f^*$ has degree $2$.

The \emph{genus} of a surface ${\cal S}$ is the maximum number of cycles in ${\cal S}$ whose complement is connected.
Let $G$ be an embedded graph with $n$ vertices, $m$ edges, and $f$ faces.  \emph{Euler's formula} states that $n-m+f = 2-\chi({\cal S})$, where $\chi({\cal S})$ is a topological invariant of the surface called its \emph{Euler characteristic}.  For a surface of genus $g$, the Euler characteristic is $2-2g$ if the surface is orientable and $2-g$ if the surface is non-orientable.  To simplify our notation, we define the \emph{Euler genus} of ${\cal S}$ as $\eg({\cal S}) := 2-\chi({\cal S})$.

\iffalse
Let ${\cal S}$ be a surface.
For any triangulation of ${\cal S}$, with $v$ vertices, $e$ edges, and $f$ faces, the quantity $v-e+f$ is a topological invariant, called
the \emph{Euler characteristic} of ${\cal S}$, and denoted by $\chi({\cal S})$.
The \emph{Euler genus} of ${\cal S}$ is defined to be $\eg({\cal S}) = 2-\chi({\cal S})$.
If ${\cal S}$ is orientable, then the \emph{orientable genus} of ${\cal S}$ is defined to be $g = 1-\chi({\cal S})/2$.
If ${\cal S}$ is non-orientable, then the \emph{non-orientable genus} of ${\cal S}$ is defined to be $k = 2-\chi({\cal S})$.
Therefore, for any orientable surface ${\cal S}$ of orientable genus $g$, we have $\eg({\cal S}) = 2g$, and for any non-orientable surface ${\cal S}$ of non-orientable genus $k$, we have $\eg({\cal S}) = k$.
In order to simplify the notation, and since this only affects our bounds by a constant factor, we will restrict the rest of the discussion to the Euler genus.

The Euler genus of a graph $G$, denoted by $\eg(G)$, is defined to be the minimum integer $\gamma\geq 0$, such that $G$ can be embedded into a surface of Euler genus $\gamma$.
Will often abuse notation, and treat graphs as topological subspaces of the surfaces into which they are embedded.
This will allow as to identify the vertices in a graph with points, and the edges with curves in the surface.
In particular, contracting a set of edges in the graph will correspond to contracting a set of curves in the surface.
\fi

\subsection{Organization}
The rest of the paper is organized as follows.
In Section \ref{sec:LP} we recall the Held-Karp LP relaxation of ATSP.
In Section \ref{sec:ribbon_decompositions} we introduce the notion of a ribbon, and prove some basic properties of decompositions of the edges of a graph into a collection of ribbons.
Using these ribbon decompositions, we show in Section \ref{sec:thin_forest} how to construct a thin forest, with a small number of connected components.
In Section \ref{sec:circulation} we how how to turn a thin forest with a small number connected components into a small collection of closed walks visiting all the vertices, and with small total cost.
Finally, in Section \ref{sec:algorithm} we show how to combine the above technical ingredients to obtain the algorithm for ATSP.

\section{The Held-Karp LP relaxation}\label{sec:LP}

We now recall the definition of the Help-Karp LP relaxation of ATSP.
Let $(\dG,c)$ be an instance of ATSP, where $\dG=(V,A)$ is a directed graph, and $c:A \to \mathbb{R}^+$.
For a set $U\subset V$, we define
\begin{align*}
\delta_{\dG}^+(U) &= \{u \rightarrow v \in A : u\in U, v\notin U\},\\
\delta_{\dG}^-(U) &= \delta_{\dG}^+(V\setminus U).
\end{align*}
For a vertex $v\in V$, we write $\delta_{\dG}^+(v) = \delta_{\dG}^+(\{v\})$,
and $\delta_{\dG}^-(v) = \delta_{\dG}^-(\{v\})$.
%We omit the subscripts when $\dG$ is clear from the context.
We define the undirected graph $\uG=(V,E)$, with
\[
E = \left\{uv \in {V \choose 2} : u\rightarrow v\in A, \mbox{ or } v\rightarrow u\in A\right\}.
\]
For any $U\subseteq V$, let
\[
\delta_{\uG}(U) = \{uv \in E : u\in U, v\notin U\}.
\]
We omit the subscript when $\uG$ is clear form the context.
We also extend $c$ to $E$ as follows.
For any $uv\in E$, we define
\[
c(uv) = \min\{ c(u\rightarrow v), c(v\rightarrow u) \}.
\]
For a function $x:A\to \mathbb{R}$, and for all $W\subseteq A$, we write $x(W) = \sum_{a\in W} x(a)$.
Using the above notation, the Held-Karp LP relaxation is defined as follows.

\iffalse
\begin{align*}
\min &\sum_{a\in A} c(a) \cdot x(a)\\
\mbox{s.t.~} &x(\delta^+(U)) \geq 1 &\text{for all } U\subseteq V\\
 &x(\delta^+(v)) = x(\delta^-(v)) = 1 &\text{for all } v\in V\\
 &x(a) \geq 0 &\text{for all } a\in A
\end{align*}
\fi

\begin{center}
\begin{tabular}{|lll|}
\hline
\multicolumn{3}{|l|}{Held-Karp LP}\\
\hline
$\min$ &$\sum_{a\in A} c(a) \cdot x(a)$ & \\
$\mbox{s.t.~}$ &$x(\delta^+(U)) \geq 1$ &$\text{for all } U\subseteq V$\\
 &$x(\delta^+(v)) = x(\delta^-(v))$ &$\text{for all } v\in V$\\
 &$x(a) \geq 0$ &$\text{for all } a\in A$\\
 \hline
\end{tabular}
\end{center}
We remark that typically the Held-Karp LP is defined on a complete graph, with the additional constrain $x(\delta^+(v)) = 1$, for all $v\in V$.
However, we will use the above (possibly weaker) formulation, which turns out to be sufficient for our analysis.
The benefit of doing so is that any feasible solution  is supported on the arcs of the input graph $G$.
%This is without loss of generality, since in we instead define the 

Let $x:A\to \mathbb{R}$ be a feasible solution for the Held-Karp LP.
We define the \emph{symmetrization} of $x$ to be a function $z:E \to \mathbb{R}$, such that for any $uv \in E$, we have
\[
z(uv) = x(u\rightarrow v) + x(v \rightarrow u).
\]

\iffalse
\begin{lemma}[Asadpour \etal~\cite{DBLP:conf/soda/AsadpourGMGS10}]
Let $(G,c)$ be an instance of ATSP.
Let $x$ be a feasible solution for the Held-Karp relaxation of $(G,c)$.
Let $\alpha,s>0$.
Let $T$ be a spanning subtree of $\overline{G}$, such that $T$ is $(\alpha,s)$-thin w.r.t.~$x$.
Then, there exists a polynomial time algorithm which given $G,c,x$, and $T$, outputs a solution ATSP, with cost at most
\[
(2\alpha + s)\cdot \sum_{a\in A} c(a) \cdot x(a).
\]
\end{lemma}

\begin{lemma}[Asadpour \etal~\cite{DBLP:conf/soda/AsadpourGMGS10}]
There exists a randomized polynomial time algorithm which given an instance $(G,c)$ of ATSP, and a feasible solution $x$ for the Held-Karp   relaxation of $(G,c)$, computes a $\left(O(\frac{\log |V|}{\log\log |V|}), 2\right)$-thin (w.r.t.~$x$) spanning subtree $T$ of $\overline{G}$,
with high probability.
\end{lemma}
\fi

\section{Ribbon decompositions}\label{sec:ribbon_decompositions}

In this section we introduce some of the machinery that we will use in our algorithm for computing a thin forest.

Let $G$ be a graph embedded into a surface.
A \emph{ribbon} in $G$ is a maximal set $R$ of parallel non-self-loop edges in $G$, such that for any $e \neq e'\in R$, the cycle $e \cup e'$ is contractible (see Figure \ref{fig:ribbon} for an example).
Note that for every ribbon $R$, if $|R|>1$, then there exists a disk in the surface that obtains $R$.
The edges in $R$ are naturally ordered inside this disk, so that every two consecutive edges in $R$ bound a face.
We say that an edge in $R$ is \emph{central} if it is a weighted (w.r.t.~$z$) median in this ordering.
Note that every ribbon has either one, or two central edges.

\begin{figure}
\begin{center}
\scalebox{0.7}{\includegraphics{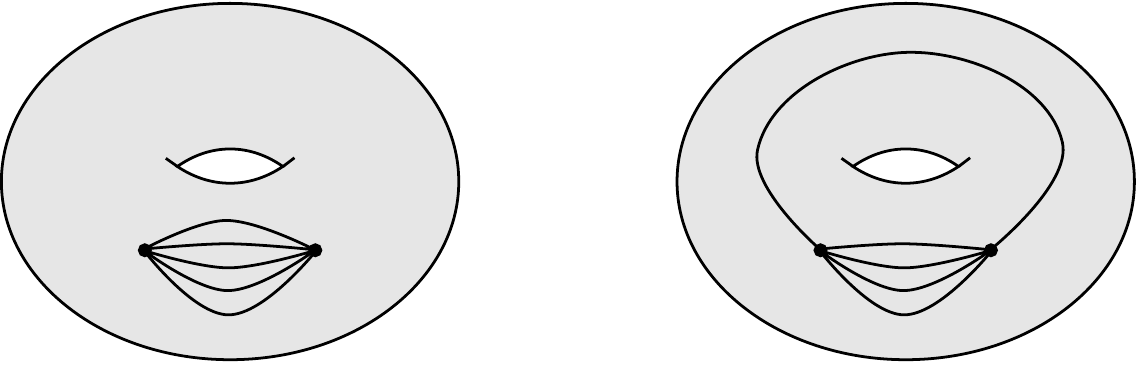}}
\caption{An example of a set of parallel edges forming a ribbon (left), and a set of parallel edges \emph{not} forming a ribbon (right).\label{fig:ribbon}}
\end{center}
\end{figure}

\begin{lemma}\label{lem:ribbon_decomposition}
Let $G$ be a graph embedded into some surface.
Then, the set of non-self-loop edges of $G$ can be uniquely decomposed into a collection of pairwise disjoint ribbons.
\end{lemma}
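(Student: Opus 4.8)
The plan is to exhibit ribbons as the equivalence classes of a natural relation on the non-self-loop edges of $G$; once this is done, both existence of the decomposition and its uniqueness are immediate consequences of the fact that equivalence classes partition a set. Define a relation $\sim$ on the non-self-loop edges of $G$ by declaring $e\sim e'$ whenever $e$ and $e'$ are parallel (i.e.\ have the same pair of endpoints) and either $e=e'$ or the cycle $e\cup e'$ is contractible. Reflexivity holds by definition, and symmetry is immediate since $e\cup e'=e'\cup e$ as subspaces of the surface, so the only real content is transitivity.

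For transitivity, suppose $e\sim e'$ and $e'\sim e''$; we may assume the three edges are pairwise distinct, hence pairwise parallel, say all with endpoints $u$ and $v$. Orient each of $e,e',e''$ as a path from $u$ to $v$. Saying that $e\cup e'$ is contractible is exactly saying that $e$ and $e'$ are homotopic rel endpoints as paths from $u$ to $v$: the cycle $e\cup e'$, traversed as $e$ followed by the reverse of $e'$, is null-homotopic if and only if $e\simeq e'$. Likewise $e'\simeq e''$. Since homotopy rel endpoints is an equivalence relation on paths, $e\simeq e''$, i.e.\ $e\cup e''$ is contractible, so $e\sim e''$. (In an embedding distinct edges meet only at shared endpoints, so each such two-edge cycle is in fact a simple closed curve and Epstein's theorem applies, but the homotopy argument does not need simplicity.) Thus $\sim$ is an equivalence relation.

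Next I would check that the ribbons of $G$ are precisely the equivalence classes of $\sim$. If $C$ is such a class, then any two of its members are parallel — from $e,e'\in C$ we get $e\sim e'$, so they share $e$'s endpoints, hence all of $C$ lies in one parallel class — and any two distinct members of $C$ form a contractible cycle by the definition of $\sim$; moreover $C$ is maximal with these properties, since if $S\supsetneq C$ were a strictly larger set of parallel edges any two of which form a contractible cycle, then any $e\in S\setminus C$ and any $e'\in C$ would satisfy $e\sim e'$, forcing $e\in C$, a contradiction. Hence $C$ is a ribbon. Conversely, any ribbon $R$ consists of pairwise $\sim$-related edges, so $R$ is contained in a single $\sim$-class $C$; but $C$ is itself a ribbon, and maximality of $R$ gives $R=C$.

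Finally, since the equivalence classes of $\sim$ form a partition of the non-self-loop edges of $G$ into pairwise disjoint blocks, and these blocks are exactly the ribbons, we obtain the desired decomposition; and it is unique because any decomposition of the edge set into ribbons is a partition into ribbons, which we have just shown must coincide with the $\sim$-classes. The only step carrying genuine content is transitivity of $\sim$, and the main point there is the reduction of contractibility of a two-edge cycle to path-homotopy rel endpoints of the two (consistently oriented) edges; everything else is routine bookkeeping about equivalence classes and maximality.
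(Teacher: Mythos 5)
Your proof is correct and matches the paper's intent: the paper simply asserts the lemma "follows immediately from the definition," and your write-up supplies exactly the justification that is being left implicit, namely that parallelism-plus-contractibility is an equivalence relation whose classes are the ribbons. The one step with real content, transitivity via the equivalence between contractibility of the two-edge cycle and path-homotopy rel endpoints, is handled correctly (a loop is freely null-homotopic iff it is null-homotopic rel a basepoint, since the trivial conjugacy class in $\pi_1$ contains only the identity).
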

\begin{proof}
It follows immediately from the definition of a ribbon.
\end{proof}

\begin{lemma}\label{lem:edges_Euler}
Let $G=(V,E)$ be a multi-graph embedded into a surface ${\cal S}$.
Let ${\cal R}$ be a decomposition of the non-self-loop edges of $G$ into ribbons.
Then, $|{\cal R}| \leq 3 |V| - 3 \chi({\cal S})$.
\end{lemma}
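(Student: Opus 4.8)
The goal is to bound the number of ribbons in a ribbon decomposition $\mathcal{R}$ of the non-self-loop edges of an embedded multigraph $G = (V,E)$ on a surface $\mathcal{S}$. The natural idea is to build an auxiliary embedded graph by keeping only one representative edge from each ribbon (plus, possibly, a second edge to witness the ribbon), discard self-loops, and then apply Euler's formula to that simplified graph. The key topological fact to exploit is that two parallel edges in the same ribbon bound a bigon — i.e., are homotopic — so a ribbon is really "one edge up to homotopy," while two edges in *different* ribbons (if parallel at all) are non-homotopic, hence bound no bigon.

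**Main steps.** First I would form the graph $G'$ on vertex set $V$ by selecting exactly one edge from each ribbon of size $1$, and exactly two edges from each ribbon of size $\geq 2$ (say, the two extreme edges of the disk, or any two — their choice is irrelevant). Discard all self-loops. This $G'$ inherits a cellular embedding in $\mathcal{S}$ (after possibly re-cellulating, since deleting edges can merge faces — but deleting edges only decreases the number of faces and leaves $n$ and $\eg$ fixed, which is the direction we need). Let $n' = |V|$, $m' = |E(G')|$, $f'$ its number of faces. Euler's formula gives $n' - m' + f' = \chi(\mathcal{S})$, so $m' = n' - \chi(\mathcal{S}) + f'$. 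Now I bound $f'$: each face of $G'$ is bounded by at least... this is where one must be careful, because $G'$ may have faces bounded by few edges — in particular the bigons coming from size-$\geq 2$ ribbons. The clean way around this is to *not* keep two edges per ribbon, but instead keep one edge per ribbon and argue directly on the simple-ish structure: after keeping one edge per ribbon, any remaining bigon in the resulting graph would witness two parallel homotopic edges in distinct ribbons, contradicting maximality of ribbons. Hence the one-edge-per-ribbon graph $G''$ has no bigons, so every face has at least $3$ sides in the boundary-walk sense, giving $2m'' = \sum_{\text{faces}} (\text{length}) \geq 3 f''$, i.e. $f'' \leq \tfrac{2}{3} m''$. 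Plug into Euler: $m'' = n'' - \chi(\mathcal{S}) + f'' \leq |V| - \chi(\mathcal{S}) + \tfrac{2}{3} m''$, so $\tfrac{1}{3} m'' \leq |V| - \chi(\mathcal{S})$, hence $m'' \leq 3|V| - 3\chi(\mathcal{S})$. Since $|\mathcal{R}| = m''$ (one edge per ribbon), we are done.

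**The main obstacle.** The subtle point — and the step I expect to need the most care — is the claim that the one-edge-per-ribbon graph $G''$ has *no bigons*. The definition says a ribbon is a *maximal* set of parallel edges pairwise bounding contractible cycles (bigons). So if two edges $e, e'$ of $G''$ bounded a bigon, they would be parallel and homotopic, hence belong to the same ribbon by maximality — but $G''$ contains only one edge per ribbon, contradiction. One must double-check that "bounds a bigon in $G''$" implies "bounds a bigon in $G$", i.e. that deleting edges can only *create* bigons from larger faces, never destroy the property that the two edges are homotopic; but homotopy of the two edges is a statement purely about those two curves in $\mathcal{S}$ and is unaffected by deleting other edges, so a bigon face in $G''$ between $e$ and $e'$ certifies $e \cup e'$ is contractible regardless. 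A second, more pedestrian subtlety is the re-cellulation issue: after deleting edges the embedding of $G''$ need not be cellular, so I would either pass to a cellular refinement (which only adds faces, again the safe direction: more faces makes $f''$ larger, and we used $f'' \leq \tfrac23 m''$... wait, that is the *wrong* direction). To be safe I would instead note that for a non-cellular embedding Euler's formula reads $n - m + f \geq \chi(\mathcal{S})$ with equality only in the cellular case is not quite standard either; the cleanest fix is to add a minimal set of edges inside non-disk faces to make it cellular, which only increases $m$ and $f$ by the same amount, preserving $m'' - f'' = n'' - \chi(\mathcal{S})$, while the no-bigon-implies-$f \leq \tfrac23 m$ bound still holds provided the added edges don't create bigons — and one can always subdivide to avoid that. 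I would spell this out carefully, as it is the one place where a careless argument could go wrong.
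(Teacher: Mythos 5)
Your proposal is correct and follows essentially the same route as the paper's proof: keep one edge per ribbon, note that maximality of ribbons (plus the exclusion of self-loops) forbids bigons, so every face has at least three distinct boundary edges, and combine $3f \le 2m$ with Euler's formula to get $m \le 3|V| - 3\chi({\cal S})$. The cellularity caveat you agonize over is real but is silently glossed over by the paper as well, so it does not distinguish your argument from theirs.
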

\begin{proof}
Let $Y\subseteq E$ be a set obtained by choosing an arbitrary edge from every ribbon $R \in {\cal R}$.
Let $G'=(V,Y)$ be the subgraph of $G$ induced by $Y$.
Since for every pair of edges $e,e'\in Y$ with common endpoints, we have that the loop $\phi(e)\cup \phi(e')$ is noncontractible, and there are no self-loops in $Y$, it follows that every face of $G'$ contains at least 3 distinct edges.
Let $F$ denote the set of faces of $G'$.
It follows that $3 |F| \leq 2 |Y|$.
By Euler's formula, we obtain
\begin{align*}
|{\cal R}| &= |Y|\\
 &\leq 3 |V| - 3 \chi({\cal S})\\
 &= 3 |V| - 3 \chi({\cal S}),
\end{align*}
as required.
\end{proof}

\section{Computing a thin forest}\label{sec:thin_forest}

As before, let $(\dG,c)$ be an instance of ATSP, with $\eg(\dG)=g$, let $x$ be a feasible solution to the Held-Karp relaxation of $(\dG,c)$, and let $z$ be the symmetrization of $x$.
The notion of a \emph{thin} set captures a key idea for rounding a solution of the Held-Karp LP (see \cite{DBLP:conf/soda/AsadpourGMGS10,DBLP:conf/soda/GharanS11}).
We begin by recalling the definition of a thin set.

\begin{definition}[Thinness]
Let $W\subseteq E$, and $\alpha,s>0$.
We say that $W$ is \emph{$(\alpha,s)$-thin} (w.r.t.~$x$) if 
for any $U\subseteq V$,
\[
|W\cap \delta(U)| \leq \alpha \cdot z(\delta(U)),
\]
and
\[
c(W) \leq s \cdot \sum_{a\in A} c(a) \cdot x(a).
\]
%For a subgraph $H \subseteq \uG$, we also say that $H$ is $(\alpha, s)$-thin, if $E(H)$ is $(\alpha,s)$-thin.
\end{definition}

In this section we show how to compute a $(O(1),O(1))$-thin forest, with at most $g$ connected components.

We begin by first showing how to compute a forest $T$ with at most $g$ components, satisfying a slightly weaker notion of thinness.
To that end, we compute a sequence of graphs $G_0,\ldots,G_t$, with $G_0=G$, as follows.
Let $i\geq 0$, and suppose we have computed $G_i$.
If $G_i$ has at most $g$ vertices, then we terminate the sequence at $G_i$, and we set $t=i$.
Otherwise, let ${\cal R}_i$ be the decomposition of the non-self-loop edges in $G_i$ into ribbons (by Lemma \ref{lem:ribbon_decomposition} the decomposition ${\cal R}_i$ is unique).
We set
\[
R_i = \argmax_{R' \in {\cal R}_i} z(R').
\]
We let $G_{i+1}$ be the graph obtained from $G_i$ by contracting all the edges in $R_i$.
%Notice that since $R_i$ is a ribbon in $G_i$, all the edges in $R_i$ are contained inside a disk, and contracting $R_i$ corresponds to contracting this disk.
This concludes the definition of the sequence of graphs $G_0,\ldots,G_t$ (see Figure \ref{fig:ribbon_contraction_sequence} for an example).
For every $i\in \{0,\ldots,t-1\}$, let $e_i$ be a central edge in $R_i$.
We define $T$ to be the graph induced by $\{e_0,\ldots,e_{t-1}\}$.
Since ribbons do not contain self-loop edges, and we only contract ribbons, it immediately follows that $T$ is a forest.
It is also immediate that $T$ contains one connected component for every vertex of $G_t$, and therefore it contains at most $g$ connected components.

It remains to show that $T$ satisfies a weak thinness condition.
We first derive the following auxiliary fact.

\begin{figure}
\begin{center}
\scalebox{0.7}{\includegraphics{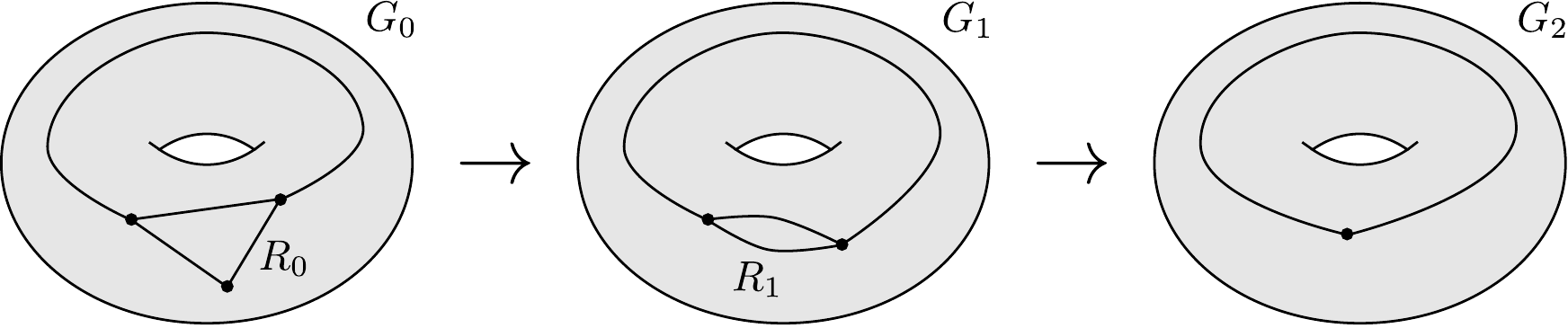}}
\caption{An example of a sequence of ribbon contractions preformed by our algorithm.\label{fig:ribbon_contraction_sequence}}
\end{center}
\end{figure}

\begin{lemma}\label{lem:contraction_sequence}
For any $i\in \{0,\ldots,t-1\}$, we have $z(R_i) \geq 2/5$.
\end{lemma}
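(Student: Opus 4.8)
The plan is to show that if $z(R_i)$ were small, then the graph $G_i$ would have too many ribbons relative to its vertex count, contradicting the fact that $G_i$ has more than $g$ vertices (which is why the sequence did not terminate at step $i$). The key tension is between Lemma~\ref{lem:edges_Euler}, which bounds $|{\cal R}_i|$ from above by $3|V(G_i)| - 3\chi({\cal S})$, and the constraint structure of the Held-Karp LP, which forces a lot of total fractional weight to be present. Since $R_i$ is the ribbon maximizing $z(\cdot)$, if $z(R_i) < 2/5$ then \emph{every} ribbon has $z$-weight less than $2/5$, so the total symmetrized weight $z(E(G_i))$ is less than $\frac{2}{5}|{\cal R}_i|$ (self-loops carry no weight across cuts and should be argued to contribute nothing relevant, or absorbed).

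Next I would lower-bound $z(E(G_i))$ using the LP. Each vertex $v$ of $G_i$ satisfies $x(\delta^+(v)) = x(\delta^-(v))$, and the cut constraint with $U = \{v\}$ gives $x(\delta^+(v)) \geq 1$, hence $z(\delta(v)) = x(\delta^+(v)) + x(\delta^-(v)) \geq 2$. Summing over all vertices and dividing by $2$ (each edge counted twice) yields $z(E(G_i)) \geq |V(G_i)|$. Here I need to be careful that contracting ribbons preserves feasibility of (the appropriate restriction of) the LP solution on $G_i$ — contraction merges vertices and the induced $x$ on $G_i$ still satisfies the degree-balance and cut constraints, since cuts in $G_i$ pull back to cuts in $G$. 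This monotonicity-under-contraction point is the one place where I'd want to state a clean sublemma (likely proved earlier or folded in here): the symmetrized weight of every vertex in every $G_i$ is at least $2$.

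Combining the two bounds: $|V(G_i)| \leq z(E(G_i)) < \frac{2}{5}|{\cal R}_i| \leq \frac{2}{5}(3|V(G_i)| - 3\chi({\cal S})) = \frac{6}{5}|V(G_i)| - \frac{6}{5}\chi({\cal S})$. Rearranging gives $\frac{1}{5}|V(G_i)| < -\frac{6}{5}\chi({\cal S})$, i.e. $|V(G_i)| < -6\chi({\cal S}) = 6\,\eg({\cal S}) - 12 < 6g$. Hmm — this gives a bound of the form $|V(G_i)| = O(g)$ rather than exactly $\leq g$, so either the constant $2/5$ is tuned so that the termination threshold in the algorithm is actually $\Theta(g)$ (the excerpt writes "$g$" but the true statement presumably uses a constant multiple), or the counting in Lemma~\ref{lem:edges_Euler} should be sharpened (e.g. using that $G_i$ has no bigons among the chosen representatives, the face-size bound $3|F| \le 2|Y|$ is what's used, and one might do slightly better). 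I would reconcile this by observing that the algorithm's stopping condition should read "$G_i$ has at most $c \cdot g$ vertices" for the constant $c$ emerging from this calculation, and then $z(R_i) \geq 2/5$ holds whenever the sequence has \emph{not} yet terminated.

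\textbf{Main obstacle.} The real work is the LP-feasibility-under-contraction argument: verifying that after contracting a ribbon, the pushed-forward fractional solution still satisfies $z(\delta(v)) \geq 2$ for every vertex $v$ of $G_i$ (equivalently $x(\delta^+(U)) \geq 1$ for singletons $U$ in the contracted graph). This is intuitively clear since contraction only coarsens the cut structure, but it must be checked that no cancellation or loss occurs — in particular that arcs becoming self-loops after contraction do not destroy the balance. Everything else is arithmetic pitting Euler's formula against the trivial LP lower bound $z(E) \geq |V|$.
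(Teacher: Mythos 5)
You correctly identify the key sublemma---that every cut of $G_i$, in particular every singleton cut $\delta_{G_i}(\{v\})$, pulls back to a cut of $G$ and therefore has $z$-weight at least $2$---and this matches the paper's equation \eqref{eq:boundaries} exactly. But your counting step contains a genuine error. From
\[
|V_i| \;\le\; z(E_i) \;<\; \tfrac{2}{5}|{\cal R}_i| \;\le\; \tfrac{6}{5}|V_i| - \tfrac{6}{5}\chi({\cal S}),
\]
the correct rearrangement is $\tfrac{1}{5}|V_i| > \tfrac{6}{5}\chi({\cal S})$, i.e.\ $|V_i| > 6\chi({\cal S}) = 12 - 6g$, not $\tfrac{1}{5}|V_i| < -\tfrac{6}{5}\chi({\cal S})$ as you wrote (you dropped the minus sign on the left without flipping the inequality). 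The conclusion $|V_i| > 12-6g$ is vacuously true, so no contradiction is reached, and no retuning of the termination threshold to $\Theta(g)$ can rescue it. The obstruction is quantitative: Lemma~\ref{lem:edges_Euler} allows $|{\cal R}_i|$ to be as large as roughly $3|V_i|$, so assuming every ribbon has weight below $2/5$ only caps the total weight at roughly $\tfrac{6}{5}|V_i|$, which sits \emph{above} your lower bound of $|V_i|$. Global averaging over all vertices and all ribbons can therefore never certify a constant as large as $2/5$; run forward, it gives $z(R_i) \ge z(E_i)/|{\cal R}_i| \ge |V_i|/(3|V_i|-3\chi({\cal S})) > 1/6$, using $|V_i| > g$ so that $-3\chi({\cal S}) = 3g-6 < 3|V_i|$.

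There are two ways to close the gap. The cheap fix is to accept the weaker constant $1/6$ via the forward averaging just described; since the lemma is only used downstream to establish $(O(1),O(1))$-thinness, this costs nothing and keeps the termination threshold at $g$ vertices. The paper instead argues \emph{locally}: because $|{\cal R}_i| = O(|V_i|)$, some single vertex $v_i$ is incident to only a constant number of ribbons, and applying the weight-$2$ lower bound at that one vertex forces one of those few ribbons to carry constant weight. The local pigeonhole beats the global average because it exploits a below-average vertex rather than the average itself, which is how the paper arrives at a constant of the form $2/O(1)$ rather than $1/6$.
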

\begin{proof}
Let $i\in \{0,\ldots,t-1\}$, $G_i = (V_i, E_i)$, and  $U\subseteq V_i$.
Since $G_i$ is obtained from $G$ via a sequence of edge contractions, it follows that there exists $U'\subseteq V$, such that $\delta_{G_i}(U) = \delta_{\uG}(U')$. 
Thus, 
\begin{align}
z(\delta_{G_i}(U)) &= z(\delta_{\uG}(U')) \notag \\
 &= x(\delta_{\dG}^+(U')) + x(\delta_{\dG}^-(U')) \notag \\
 &\geq 2. \label{eq:boundaries}
\end{align}

By construction we have that for any $i\in \{0,\ldots,t-1\}$, $|V_i| \geq g$.
By lemma \ref{lem:edges_Euler} we have that for any $i\in \{0,\ldots,t-1\}$,
\begin{align*}
|{\cal R}_i| &\leq 3 |V_i| - 3\chi({\cal S})\\
 &= 3 |V_i| - 6 + 3 g \\
 &< 6 |V_i| - 6.
\end{align*}
Therefore, there exists $v_i\in V_i$, such that all the non-self-loop edges incident to $v_i$ are contained in at most $5$ ribbons.
By \eqref{eq:boundaries}, we have that
\[
\sum_{e\in \delta(\{v_i\})} z(e) \geq 2.
\]
This implies that there exists a ribbon $R_i'\in {\cal R}_i$, such that $z(R_i') \geq 2/5$.
Therefore,
\begin{align*}
z(R_i) &\geq z(R_i') \geq 2/5,
\end{align*}
which concludes the proof.
\end{proof}

We can now show that $T$ satisfies a weak notion of thinness.

\begin{lemma}\label{lem:thin1}
For any $U\subseteq V$, we have 
$|T \cap \delta(U)| \leq O(1) \cdot z(\delta(U))$.
\end{lemma}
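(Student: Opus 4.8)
The plan is to fix a cut $U \subseteq V$ and bound the number of edges of $T = \{e_0, \ldots, e_{t-1}\}$ that cross $\delta_{\uG}(U)$ in terms of $z(\delta_{\uG}(U))$. The key idea is that each $e_i$ is a \emph{central} (weighted-median) edge of the ribbon $R_i$, and by Lemma \ref{lem:contraction_sequence} every ribbon $R_i$ carries total symmetrized weight $z(R_i) \geq 2/5$. I would first observe that, tracking the sequence of contractions, there is a fixed set $U' \subseteq V$ such that for every $i$, the cut $\delta_{G_i}(U_i)$ (where $U_i$ is the image of $U'$ in $G_i$) pulls back to $\delta_{\uG}(U')$; in particular $z(\delta_{G_i}(U_i)) = z(\delta_{\uG}(U'))$ for all $i$. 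So it suffices to show that the number of indices $i$ for which $e_i \in \delta(U)$ is $O(1) \cdot z(\delta_{\uG}(U'))$.

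The heart of the argument is a charging scheme. When $e_i \in \delta(U)$, i.e.\ the ribbon $R_i$ is cut by $U$ at the position of its central edge, I want to argue that a constant fraction of the $z$-weight of $R_i$ lies inside $\delta_{G_i}(U_i)$. Because $e_i$ is a weighted median of the natural linear order on $R_i$, at least half of $z(R_i)$ lies on the $e_i$-side of that median in each direction; combined with the fact that the ribbon's edges are parallel (share both endpoints), if the cut separates the two endpoints of $R_i$ then \emph{every} edge of $R_i$ crosses the cut, giving $z(\delta_{G_i}(U_i)) \geq z(R_i) \geq 2/5$ — and that already bounds things if $z(\delta(U'))$ were small. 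The subtlety is that many different ribbons $R_i$ can be cut by the same $U$, so I need these contributions to be (nearly) disjoint in $z$-mass. Here I would use that after contracting $R_i$ its two endpoints are merged, so in $G_{i+1}, \ldots, G_t$ the edges of $R_i$ no longer appear as a cut-crossing ribbon in the same way; more precisely, each edge of $\uG$ belongs to exactly one ribbon $R_i$ in the sequence (the edge-sets $R_0, R_1, \ldots$ are pairwise disjoint, since after contracting $R_i$ those edges vanish), so $\sum_i z(R_i) \leq z(E) = \sum_{uv \in E} z(uv)$ is not directly bounded, but $\sum_{i : R_i \text{ cut by } U'} z(R_i) \leq z(\delta_{\uG}(U'))$ provided that being "cut by $U'$" forces all of $R_i$ into $\delta_{\uG}(U')$ and the $R_i$ are disjoint. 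Assembling: $|T \cap \delta(U)| = |\{i : e_i \in \delta(U)\}| \leq \sum_{i : e_i \in \delta(U)} 1 \leq \frac{5}{2}\sum_{i : e_i \in \delta(U)} z(R_i) \leq \frac{5}{2} z(\delta_{\uG}(U')) = \frac{5}{2} z(\delta_{\uG}(U))$, which is the claimed $O(1)$ bound.

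The step I expect to be the main obstacle is making the disjointness/charging rigorous: I must verify carefully that when $e_i \in \delta_{G_i}(U_i)$, all of $R_i$'s mass is genuinely counted in $z(\delta_{\uG}(U'))$ and that no edge is double-counted across different $i$. The disjointness of the $R_i$'s as subsets of $E$ follows because contracting a ribbon deletes its edges from the graph, so an edge surviving into $G_i$ and chosen into $R_i$ cannot reappear in any later $R_j$; and the "all of $R_i$ crosses the cut" claim follows because a ribbon is a set of parallel edges with a common pair of endpoints $u, v$, so $R_i \subseteq \delta_{G_i}(U_i)$ iff $u, v$ lie on opposite sides — and if $e_i \in \delta(U)$ then in particular $e_i$ has one endpoint in $U_i$ and one outside, so $u \in U_i$, $v \notin U_i$ (or vice versa), hence every edge of $R_i$ crosses. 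The one genuine care point is self-loops: a ribbon contracted earlier could turn a later edge into (part of) a self-loop, but such edges are never chosen into any $R_j$ since ribbons contain no self-loops, so they never contribute to $T$, and the bookkeeping goes through. Finally, pulling $\delta_{G_i}(U_i)$ back to $\delta_{\uG}(U')$ uses exactly the contraction-invariance of cuts already exploited in the proof of Lemma \ref{lem:contraction_sequence}.
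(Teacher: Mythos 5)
Your charging scheme has a genuine gap at its central step. You assume that an arbitrary cut $U\subseteq V$ of the original graph ``pushes forward'' to a cut $U_i\subseteq V_i$ of each contracted graph $G_i$, and that the edges of a ribbon $R_i$ share a common pair of endpoints \emph{in $G$}, so that $e_i\in\delta(U)$ forces $R_i\subseteq\delta(U)$. Neither holds. Lemma \ref{lem:contraction_sequence} gives only the pull-back direction (a cut of $G_i$ lifts to a cut of $G$); the converse fails whenever $U$ separates two vertices of $V$ that have been merged into a single vertex of $V_i$. The edges of $R_i$ are parallel only in $G_i$: in $G$ they may join different representatives of the two contracted blobs, so a cut that splits a blob can contain the central edge $e_i$ while containing no other edge of $R_i$ (e.g.\ $e_i=ab$, the remaining ribbon edges are $a'b'$ with $a,a'$ in one blob and $b,b'$ in the other, and $U$ contains $a$ but neither $a'$ nor $b$ nor $b'$). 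For such cuts the inequality $\sum_{i:\,e_i\in\delta(U)} z(R_i)\le z(\delta(U))$ — the heart of your bound — simply fails, and since Lemma \ref{lem:thin1} must hold for \emph{every} $U\subseteq V$ (this is what feeds Hoffman's theorem in Lemma \ref{lem:circulation}), the argument as written only covers the very restricted family of cuts that are unions of blobs of $G_t$. In that restricted regime your computation is fine and even gives a clean constant $5/2$, but it does not prove the lemma.

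The paper closes exactly this gap by passing to the dual: $\delta(U)$ dualizes to a union of pairwise edge-disjoint cycles in $G^*$, and each ribbon $R_i$ dualizes to a path $P_i^*$ whose internal vertices have degree $2$, so any dual cycle crossing the central edge $e_i^*$ is forced to traverse a contiguous half of $P_i^*$ (up to the next $T$-edge on the cycle) and hence to pick up $z$-mass at least $z(R_i)/2\ge 1/10$ per tree edge. This topological constraint is what replaces your ``all of $R_i$ crosses the cut'' claim and is valid for arbitrary $U$, including blob-splitting cuts. To repair your proof you would need some substitute for this mechanism — for instance, an argument that the worst-case ratio $|T\cap\delta(U)|/z(\delta(U))$ is attained by a blob-respecting cut — but no such reduction is apparent, and I would recommend adopting the dual-cycle route instead.
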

\begin{proof}
For any $i\in \{0,\ldots,t\}$, let $G^*_i = (V_i^*, E_i^*)$ be the dual of $G_i$.
For any $i\in \{0,\ldots,t-1\}$, let $R_i^*\subseteq E_i^*$ be the set of the duals of all edges in $R_i$, and let $e_i^*$ be the dual of $e_i$.

Note that for any $i\in \{0,\ldots,t-1\}$, all self-loop edges in $G_i$ are noncontractible.
This is because self-loops can only be created when contracting the edges in a ribbon $R$.
For any such self-loop $e$, since $e\notin R$, it follows that there exists $e'\in R$, such that the loop $e\cup e'$ is noncontractible, which implies that $e$ becomes a noncontractible loop after contracting all edges in $R$.
This implies that for any $i\in \{0,\ldots,t-1\}$, there exists a path $P_i^*$ in $G_i^*$, such that $E(P_i^*) = R_i^*$, and all internal vertices in $P_i^*$ have degree $2$.

For any $i\in \{0,\ldots,t-1\}$, $G_{i+1}$ is obtained from $G_i$ by contracting all edges in $R_i$.
Since the edges in $R_i$ are non-self-loops,
this means that $G_{i+1}^*$ is obtained from $G_i^*$ by deleting all edges in $R_i^*$, and all internal vertices in $P_i^*$.

\begin{center}
\scalebox{0.7}{\includegraphics{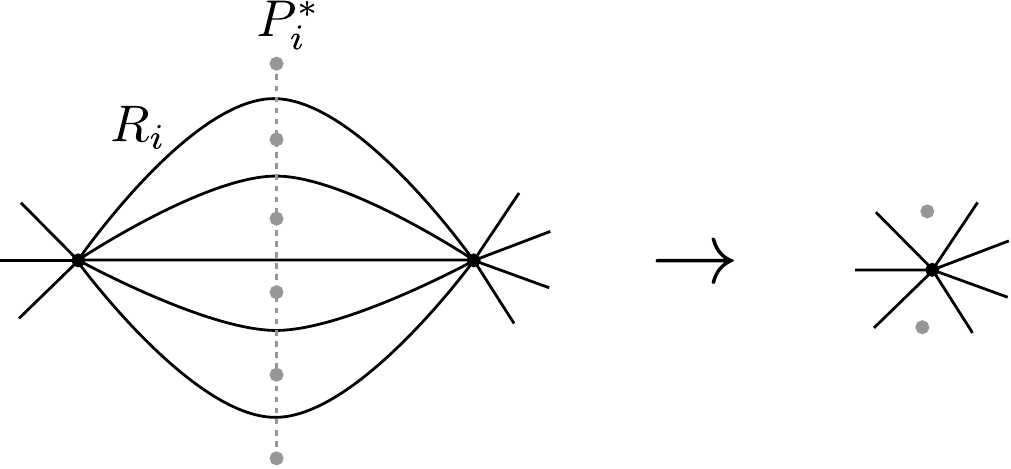}}
\end{center}

Consider some set $U\subseteq V$.
Let $X = \delta(U)$, and let $X^*$ be the set of the duals of all edges in $X$, i.e.~$X^* = \{e^* \in E^* : e\in X\}$.
There exists a collection ${\cal K}^*$ of pairwise edge-disjoint cycles  in $G^*$, such that
\[
X^* = \bigcup_{K^*\in {\cal K}^*} E(K^*).
\]

Let $K^*\in {\cal K}^*$.
Suppose fist that $K$, contains exactly one edge of $T$, and 
let $e_i$ be that edge.
It follows by construction that $P_i^* \subseteq K^*$, which implies
$z(K) \geq z(R_i) \geq 2/5$.
Otherwise, suppose that $K$, contains at least two edges of $T$.
Let $e_i^*$, $e_j^*$ be two such edges that are consecutive in $K^*$.
Assume w.l.o.g.~that $i<j$.
By construction we have that the subpath $Q^*$ of $K^*$ between (and including) $e_i^*$, and $e_j^*$ satisfies $z(Q) \geq z(R_i)/2 \geq 1/10$.
To summarize, we have that for any $K^*\in {\cal K}^*$
\begin{align*}
z(K) &\geq \frac{1}{20} |T \cap K|.
\end{align*}
Summing up over all cycles in ${\cal K}^*$, we obtain
\begin{align*}
|T \cap \delta(U)| &= \sum_{K^* \in {\cal K}^*} |T \cap K|\\
 &\leq \sum_{K^* \in {\cal K}^*} 20\cdot z(K)\\
 &= 20\cdot z(\delta(U)),
\end{align*}
concluding the proof.
\end{proof}

We are now ready to prove the main result of this section.
Lemma \ref{lem:thin1} gives a spanning forest that satisfies a weak notion of thinness.
More specifically, $T$ might have very large cost (w.r.t.~$c$).
We show how to transform this forest into a $(O(1),O(1))$-thin spanning forest.
Our proof uses the argument from Oveis Grahan, and Saberi \cite{DBLP:conf/soda/GharanS11}, who obtained this transformation for the case of spanning trees.
We observe that their proof works in the more general setting of spanning forests with a fixed number of connected components.
We give the proof for completeness.

\iffalse
\begin{lemma}[Oveis Grahan, and Saberi \cite{DBLP:conf/soda/GharanS11}]\label{lem:thin_cost}
Let $k>0$, and let ${\cal F}$ be the family of all spanning forests in $\uG$, with $k$ connected components.
Let $\alpha>0$, and 
suppose that there exists a polynomial-time algorithm which computes a forest $T\in {\cal F}$, such that for any $U\subseteq V$,
\[
|T\cap \delta(U)| \leq \alpha \cdot z(\delta(U)).
\]
Then, there exists a polynomial-time algorithm which computes a forest $T'\in {\cal F}$, such that $T'$ is $(2\alpha, 2\alpha)$-thin, w.r.t.~$x$.
\end{lemma}
\fi

%We are now ready to prove the main result of this section.

\begin{lemma}[Computing a thin forest]\label{lem:thin2}
There exists a polynomial time algorithm which 
computes a $(O(1), O(1))$-thin (w.r.t.~$x$) spanning forrest $T$ in $\uG$, with at most $g$ connected components.
\end{lemma}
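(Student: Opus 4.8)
The plan is to combine Lemma~\ref{lem:thin1} with the cost-fixing argument of Oveis Gharan and Saberi~\cite{DBLP:conf/soda/GharanS11}, adapted to forests. Let $T_0$ be the forest produced above via the ribbon-contraction sequence; by Lemma~\ref{lem:thin1}, $T_0$ has at most $g$ connected components and satisfies $|T_0\cap\delta(U)|\le 20\cdot z(\delta(U))$ for every $U\subseteq V$, so $T_0$ meets the ``combinatorial'' half of thinness with $\alpha=O(1)$. What $T_0$ may fail is the cost bound: some central edge $e_i$ could be far more expensive than its fractional weight warrants. The key observation — exactly as in~\cite{DBLP:conf/soda/GharanS11} — is that the combinatorial thinness property is preserved under swapping any edge $e\in T$ for a cheaper edge $e'$ lying on the unique cycle of $T+e'$ through $e$ (when $e,e'$ join the same two components), or more robustly, that the family of forests with a fixed number $k$ of components forms the independent sets of a matroid, and the thinness constraint $|T\cap\delta(U)|\le\alpha\, z(\delta(U))$ is, for a \emph{fixed} cut, just a matroid constraint.

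The main step I would carry out is the following. Fix the target number of components $k\le g$ (take $k$ to be the number of components of $T_0$). Consider the graphic matroid $M$ on $\uG$, restricted to forests; ``forests with exactly $k$ components'' are the bases of the truncation of $M$ to rank $n-k$. I would then run the standard ``reweighting / local-search'' argument: starting from $T_0$, as long as there is an edge $e\in T$ and a non-tree edge $e'$ with $c(e')<c(e)$ such that $T-e+e'$ is still a forest with $k$ components and still satisfies $|(T-e+e')\cap\delta(U)|\le 2\alpha\, z(\delta(U))$ for all $U$, perform the swap. Equivalently, and more cleanly, I would solve a single min-cost problem: minimize $c(T)$ over forests $T$ with $k$ components subject to the (exponentially many, but cut-structured) thinness constraints at level $2\alpha$; by the uncrossing / submodular structure of cuts and the matroid structure of forests, this is a matroid-intersection-type or min-cost-forest computation solvable in polynomial time, and $T_0$ certifies feasibility at level $\alpha<2\alpha$. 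The resulting $T$ then satisfies $|T\cap\delta(U)|\le 2\alpha\, z(\delta(U)) = O(1)\cdot z(\delta(U))$ by construction.

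It remains to bound $c(T)$. Here I would invoke the fractional argument of~\cite{DBLP:conf/soda/GharanS11}: the point $z/2$ (suitably scaled) lies in the forest polytope scaled by the thinness level — more precisely, $2\alpha$ times the symmetrized LP point dominates a fractional forest with $k$ components once $\alpha$ is a large enough constant, because every cut has $z(\delta(U))\ge 2$ and every vertex has $z(\delta(v)) = x(\delta^+(v))+x(\delta^-(v)) = 2$, so scaling $z$ by $O(1)$ makes it feasible for the (truncated) spanning-forest polytope's constraints. Since the integral optimum of a matroid/forest LP with integral right-hand side equals its fractional optimum, the minimum-cost forest $T$ satisfying the thinness constraints has $c(T)\le O(1)\cdot z\!\cdot\! c = O(1)\cdot\sum_{a\in A} c(a)x(a)$. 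Combining the two bounds, $T$ is $(O(1),O(1))$-thin with at most $g$ components, as claimed.

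The hard part is the polynomial-time implementation and the precise constants: one must argue that the min-cost-forest-subject-to-thinness problem is tractable despite exponentially many cut constraints (the standard route is uncrossing plus the fact that the binding thinness constraints form a laminar or cross-free family, reducing to a polynomial-size LP that is integral because it describes faces of a matroid polytope), and one must check that $z$ scaled by the right constant genuinely lies in the forest polytope with $k$ components — this uses $z(\delta(U))\ge 2$ together with a counting bound on how the $g$ ``deficiencies'' (the fact that we only demand $n-g$ edges, not $n-1$) interact with the polytope's rank constraint. Everything else is a routine transcription of the spanning-tree argument in~\cite{DBLP:conf/soda/GharanS11} to the forest setting, which the paper has already announced is straightforward.
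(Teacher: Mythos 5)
Your first paragraph is on target: Lemma \ref{lem:thin1} already delivers the combinatorial half of thinness with $\alpha=O(1)$ and at most $g$ components, and the only remaining issue is the cost bound, for which the Oveis Gharan--Saberi transformation is the right reference. But the mechanism you then propose for that transformation is not their argument, and it has a genuine gap. You want to solve ``minimize $c(T)$ over forests with $k$ components subject to $|T\cap\delta(U)|\le 2\alpha\,z(\delta(U))$ for all $U$'' exactly, claiming this is a matroid-intersection-type problem whose LP uncrosses to an integral polytope. It is not: the forests satisfying all cut-thinness constraints are not the common independent sets of two matroids (each cut constraint is a separate restriction, and an intersection of three or more matroids is already NP-hard), and the polytope obtained by intersecting the forest polytope with these cut constraints is not integral. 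Indeed, even \emph{deciding feasibility} of a spanning tree subject to prescribed upper bounds on every cut is essentially the thin-tree problem, which is the central open difficulty in this whole line of work; uncrossing arguments in this regime (Singh--Lau style) only yield \emph{additive violations} of the side constraints, not exact feasibility plus integrality. Your fallback local-search variant (swap edges while preserving thinness at level $2\alpha$) terminates at a local optimum with no global cost guarantee. So the step ``the integral optimum equals the fractional optimum, hence $c(T)\le O(1)\cdot\sum_a c(a)x(a)$'' does not go through.

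The argument the paper actually uses avoids optimizing over thin forests altogether. The key observation is that the algorithm of Lemma \ref{lem:thin1} requires only that $z$ be non-negative with $z(\delta(U))\ge 2$ on every cut --- not that it be the symmetrization of an LP solution. One therefore sets $z_0 \approx 3z$ (rounded to multiples of $1/n^2$) and iterates $N=n^2/\alpha$ times: apply Lemma \ref{lem:thin1} to $z_i$ to get a forest $T_{i+1}$, then decrement $z_i$ by $1/n^2$ on each edge of $T_{i+1}$ to form $z_{i+1}$. The total weight removed from any cut over all iterations is at most about $z(\delta(U))$, so the invariant $z_i(\delta(U))\ge 2$ persists and the subroutine remains applicable; meanwhile each unit of cost in each $T_i$ is charged to $1/n^2$ of weight consumed from $z_0$, giving $\frac{1}{N}\sum_i c(T_i)\le 3\alpha\sum_a c(a)x(a)$. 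Taking $T=\arg\min_i c(T_i)$ yields a forest that is simultaneously $O(1)$-cut-thin (w.r.t.\ $z_0\le 3z$) and of cost $O(1)\cdot\sum_a c(a)x(a)$. This charging/averaging device is the idea your writeup is missing; without it (or some substitute), the cost bound is not established.
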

\begin{proof}
We first remark that the algorithm of Lemma \ref{lem:thin1} returns a forest $T$ satisfying for all $U\subseteq V$, 
$|T \cap \delta(U)| \leq \alpha  z(\delta(U))$, for some $\alpha=O(1)$, even if $z$ is not the symmetrization of some feasible solution to the Held-Karp LP.
The only requirement for the algorithm of Lemma \ref{lem:thin1} is that $z$ is non-negative, and for all $U'\subseteq V$, we have $z(\delta_G(U')) \geq 2$.

Let $N=n^2/\alpha$.
We compute a sequence of non-negative functions $z_0,\ldots,z_N$, and a sequence of spanning forests $T_1,\ldots,T_N$, each having at most $g$ connected components, as follows.
We set $z_0=3\lfloor z n^2 \rfloor / n^2$, and inductively maintain the invariant that for any $i\in \{1,\ldots,N\}$: 
\begin{align}
\text{for all } U'\subseteq V, ~~ z(\delta_G(U')) \geq 2 \label{eq:invariant}
\end{align}
It is immediate that $z_0$ is non-negative, and satisfies \eqref{eq:invariant}.
Given $z_i$, for some $i\in \{0,\ldots,N-1\}$, we compute $z_{i+1}$, $T_{i+1}$ as follows.
Using Lemma \ref{lem:thin1} we find a spanning forrest $T_i$ with at most $g$ connected components, such that 
\begin{align}
\text{for all } U\subseteq V, ~~  
|T_{i+1} \cap \delta_G(U)| \leq \alpha z_{i}(\delta_G(U)) \label{eq:weak_thinness}
\end{align}
For any $e\in E$, we set
\begin{align*}
z_{i+1}(e) = \left\{\begin{array}{ll}
z_i(e) - 1/n^2 & \text{ if } e\in T_{i+1}\\
z_i(e) & \text{ if } e \notin T_{i+1}
\end{array}\right.
\end{align*}
We need to verify that \eqref{eq:invariant} holds for $z_{i+1}$.
We have that for any $U' \subseteq V$,
\begin{align}
z_{i+1}(d_G(U')) &\geq z_i(\delta_G(U')) - \alpha/n^2 \label{eq:inv1}\\
 &\geq z_0(\delta_G(U')) - N \alpha/n^2 \notag\\
 &> 3 (z(\delta_G(U')) - 1) - N \alpha / n^2 \label{eq:inv2}\\
 &\geq 2 \label{eq:inv3},
\end{align}
where \eqref{eq:inv1} follows by \eqref{eq:weak_thinness}, 
\eqref{eq:inv2} by the fact that $z_0=3\lfloor z n^2 \rfloor / n^2$, and every cut contains less than $n^2$ edges,
and \eqref{eq:inv3} by the fact that $z$ is the symmetrization of a feasible solution to the Held-Karp LP.
Therefore, the inductive invariant \eqref{eq:invariant} is maintained.
Lastly, we need to verify that $z_{i+1}$ is non-negative.
Note that $z_0$ takes values that are  multiples of $1/n^2$, and we only decrease the values of $z_i$ by $1/n^2$, which implies that for all $i$, all values in $z_i$ are multiples of $1/n^2$.
An edge $e$ can appear in the forest $T_{i+1}$ only if $z_i(e)>0$.
Therefore, when setting $z_{i+1}(e)=z_i(e)$, the value $z_{i+1}(e)$ remains non-negative.

We now set the desired forest $T$ to be 
\[
T = \argmin_{T' \in \{T_1,\ldots,T_N\}}  c(T').
\]
It remains to verify that $T$ is $(O(1),O(1))$-thin.
Suppose that $T=T_i$, for some $i\in \{1,\ldots,N\}$.
By \eqref{eq:weak_thinness}, we have that for all $U \subseteq V$,
\begin{align}
|T_{i} \cap \delta_G(U)| &\leq \alpha  z_{i-1}(\delta_G(U)) \notag\\
 &\leq \alpha  z_0(\delta_G(U)) \notag \\
 &\leq 3 \alpha  z(\delta_G(U)) \label{eq:thin_forest_1}
\end{align}
For every $uv\in E$, we have $\sum_{uv\in T_i} 1/n^2 \leq z_0(uv) \leq 3z(uv)$,
and therefore $c(uv) \sum_{uv\in T_i} 1/n^2 \leq 3 z(uv) c(uv) \leq 3 x(u\to v) c(u \to v) + x(v \to u) c(v \to u)$.
Summing over all edges $uv \in E$, we obtain
\begin{align}
c(T) &\leq \frac{1}{N} \sum_{i=1}^N c(T_i) \notag \\
 &\leq 3 \frac{1}{N} n^2 \sum_{a\in A} x(a) c(a) \notag \\
 &= 3\alpha \sum_{a\in A} x(a) c(a) \label{eq:thin_forest_2}
\end{align}
Combining \eqref{eq:thin_forest_1} \& \eqref{eq:thin_forest_2} we obtain that $T$ is $(3\alpha, 3\alpha)$-thin, as required.
\end{proof}

%\section{Walking in the forest}\label{sec:circulation}
\section{From thin forests to walks}\label{sec:circulation}

In this section we present the last missing ingredient of our algorithm.
More specifically, we show how given a thin forest with a few connected components, we can construct a small number of closed walks that span the input graph, and having small total cost.
This step is similar to the argument in \cite{DBLP:conf/soda/AsadpourGMGS10}.
However, here we need a slightly modified version since we are dealing with a thin forest, instead of a thin tree, so we include the proof for completeness.
We will use the following result due to Hoffman \cite{schrijver2003combinatorial}.

\begin{theorem}[Hoffman's circulation theorem \cite{schrijver2003combinatorial}]\label{thm:hoffman_circulation}
Let $\dG=(V,A)$ be a directed graph, and let $l,u:A\to \mathbb{R}$.
Then, there exists a circulation $f$ with $l(a) \leq f(a) \leq u(a)$ for all $a\in A$, if and only if the following two conditions are satisfied:
\begin{description}
\item{(1)}
$l(a) \leq u(a)$ for all $a\in A$.
\item{(2)}
For all $U\subseteq V$, we have $l(\delta^-(U)) \leq u(\delta^+(U))$.
\end{description}
Furthermore, if $l$ and $u$ are integer-valued, then $f$ can be chosen to be integer-valued.
\end{theorem}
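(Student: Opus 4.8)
The plan is to handle the two directions separately, with essentially all the work in the ``if'' direction. For necessity, suppose $f$ is a circulation with $l(a)\le f(a)\le u(a)$ for all $a$. Then (1) is immediate, and for any $U\subseteq V$ the arcs with both endpoints in $U$ contribute equally to in- and out-flow, so summing the conservation equations over the vertices of $U$ gives $f(\delta^-(U))=f(\delta^+(U))$; combined with $l(\delta^-(U))\le f(\delta^-(U))$ and $f(\delta^+(U))\le u(\delta^+(U))$ this yields (2). The substance is the converse, which I would prove by a residual-graph / minimum-imbalance argument rather than by citing max-flow.

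For sufficiency, call $f:A\to\mathbb{R}$ a \emph{pseudo-circulation} if $l(a)\le f(a)\le u(a)$ for all $a$; by (1) the set $P$ of pseudo-circulations is a nonempty compact polytope. For $f\in P$ define the imbalance $\beta_f(v)=f(\delta^-(v))-f(\delta^+(v))$ and $\Phi(f)=\sum_{v\in V}|\beta_f(v)|$, and note $\sum_v\beta_f(v)=0$ always. Pick $f\in P$ minimizing the (continuous, piecewise-linear) function $\Phi$; such an $f$ exists by compactness. I claim $\Phi(f)=0$, which makes $f$ the desired circulation. Suppose not; then some vertex $v_0$ has $\beta_f(v_0)>0$. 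Form the residual digraph on $V$ with a forward arc $x\to y$ whenever some $a=(x,y)\in A$ has $f(a)<u(a)$ and a backward arc $y\to x$ whenever some $a=(x,y)\in A$ has $f(a)>l(a)$ (quantifying over individual arcs, to accommodate parallel arcs), and let $U$ be the set of vertices reachable from $v_0$ in it. If some $w\in U$ has $\beta_f(w)<0$, then pushing a small $\eps>0$ along a simple residual path from $v_0$ to $w$ (raising $f$ on forward steps, lowering it on backward steps) keeps $f\in P$, decreases $\beta_f(v_0)$ and $|\beta_f(w)|$ each by $\eps$, and leaves all other imbalances unchanged, strictly decreasing $\Phi$ — contradicting minimality. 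Hence every $w\in U$ has $\beta_f(w)\ge 0$ while $\beta_f(v_0)>0$.

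Now inspect the cut at $U$. Every arc $a=(x,y)$ with $x\in U$, $y\notin U$ satisfies $f(a)=u(a)$, since otherwise the forward residual arc $x\to y$ would put $y$ in $U$; and every arc $a=(x,y)$ with $x\notin U$, $y\in U$ satisfies $f(a)=l(a)$, since otherwise the backward residual arc $y\to x$ would put $x$ in $U$. Therefore
\[
0<\sum_{w\in U}\beta_f(w)=f(\delta^-(U))-f(\delta^+(U))=l(\delta^-(U))-u(\delta^+(U)),
\]
contradicting condition (2). Hence $\Phi(f)=0$ and the circulation exists. For the integrality claim, run the same argument over the integer points of $P$: when $l,u$ are integral, $f\equiv l$ is an integral pseudo-circulation, $\Phi$ takes values in $\mathbb{Z}_{\ge 0}$ there, and the residual path used above has an integral bottleneck, so the push producing the contradiction can be taken integral; thus an integral minimizer of $\Phi$ over integer points is an integral circulation.

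I expect the main obstacle to be the cut-saturation step — verifying that across $\delta^+(U)$ all arcs are at their upper bound and across $\delta^-(U)$ all arcs are at their lower bound — which is precisely why $U$ is taken to be the residual-reachable set; the rest is bookkeeping, with the only other care needed being that in the real-valued case one argues about a single $\Phi$-minimizer (so there is no convergence issue) and that the residual arcs are defined per arc rather than per vertex pair. As an alternative one could reduce to the max-flow/min-cut theorem by replacing each $a=(v,w)$ with a capacity-$(u(a)-l(a))$ arc and routing the mandatory $l(a)$ units through a fresh source and sink, but the self-contained version above seems cleaner to present.
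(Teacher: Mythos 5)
The paper does not prove this statement at all: it is quoted verbatim as a known theorem from Schrijver's monograph and used only as a black box in Lemma~\ref{lem:circulation}, so there is no internal proof to compare against. Your argument is a correct, self-contained proof. The necessity direction is right (summing conservation over $U$ gives $f(\delta^-(U))=f(\delta^+(U))$, squeezed between $l(\delta^-(U))$ and $u(\delta^+(U))$). For sufficiency, the minimum-imbalance argument is sound: $P$ is nonempty by (1), a minimizer of the continuous $\Phi$ exists by compactness, the augmenting push along a residual path from an excess vertex $v_0$ to a deficit vertex $w$ strictly decreases $\Phi$ (you correctly note the intermediate vertices are unaffected and that residual arcs must be defined per arc to handle multigraphs), and when no deficit vertex is reachable the saturation of $\delta^+(U)$ at $u$ and of $\delta^-(U)$ at $l$ turns $\sum_{w\in U}\beta_f(w)>0$ into $l(\delta^-(U))>u(\delta^+(U))$, contradicting (2). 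The integrality claim also goes through since $f\equiv l$ is an integral starting point and every bottleneck in the push is a positive integer. This is essentially the classical augmenting-path proof one gets by unfolding the usual reduction to max-flow/min-cut (your own ``alternative'' remark), and it is entirely adequate here; the only cosmetic slip is the stray factor in your line $\smash{l(\delta^-(U))\le 2\alpha\, u(\delta^+(U))}$-style bookkeeping, which does not occur in your write-up --- the displayed chain is clean.
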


We are now ready to prove the main result of this section.

\begin{lemma}[From thin forests to Eulerian walks]\label{lem:circulation}
%Let $(\dG,c)$ be an instance of ATSP, and let $x:V \to \mathbb{R}$ be a feasible solution for the Held-Karp relaxation of $(\dG,c)$.
Let $\alpha,s>0$, and let $T$ be a $(\alpha,s)$-thin (w.r.t.~$x$) spanning forest in $\uG$ with at most $k$ connected components.
Then, there exists a polynomial-time algorithm which computes a collection of closed closed walks $C_1,\ldots,C_{k'}$ in $\dG$, for some $k'\leq k$, that visit all the vertices in $\dG$, and such that 
$\sum_{i=1}^k c(C_i) \leq (2\alpha + s) \sum_{a\in A} c(a) \cdot x(a)$.
\end{lemma}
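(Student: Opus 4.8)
The plan is to mimic the Asadpour et al. argument for turning a thin tree into an Eulerian walk, but accounting for the fact that $T$ is a forest with $k$ components rather than a tree. First I would set up a lower-bound function $l$ and an upper-bound function $u$ on the arcs $A$ of $\dG$, designed so that an integral circulation $f$ with $l \le f \le u$ has the property that the multiset of arcs where $f$ is positive, together with the tree edges $T$ (each doubled, with one arc in each direction), forms a subgraph in which every connected component is Eulerian and spans a union of $T$-components. Concretely, for each tree edge $uv \in T$ I would force $f$ to use the corresponding arcs so as to make the degrees balance; for the non-tree arcs I would take $l(a) = 0$ and $u(a)$ proportional to $x(a)$ (scaled up by a factor related to $\alpha$), so that the fractional solution $x$ itself, suitably scaled, certifies feasibility via Hoffman's conditions.

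The key steps, in order: (1) Build the instance for Hoffman's theorem. Replace each edge $uv$ of $T$ by the two arcs $u\to v$ and $v\to u$, giving each a lower and upper bound of $1$ (so every tree edge is traversed exactly once in each direction — this is what makes each tree component, by itself, Eulerian and connected). For every arc $a \in A$ not forced by the tree, set $l(a)=0$ and $u(a) = \lceil \text{(something like } \alpha\cdot x(a)\text{)} \rceil$ or a similar integral cap; actually it is cleaner to scale so that $\alpha x$ is an integral feasible fractional point. (2) Verify Hoffman's two conditions. Condition (1), $l \le u$, is immediate. For condition (2), $l(\delta^-(U)) \le u(\delta^+(U))$ for all $U\subseteq V$: the contribution to $l(\delta^-(U))$ from tree arcs is exactly $|T \cap \delta(U)|$, and by $(\alpha,s)$-thinness this is at most $\alpha\, z(\delta(U)) = \alpha(x(\delta^+(U)) + x(\delta^-(U)))$, which is covered by the scaled-$x$ capacity on $\delta^+(U)$ together with the $x(\delta^-(U))$ part; one has to bookkeep the two directions carefully but it goes through. (3) Extract an integral circulation $f$ from Hoffman. (4) Interpret $f$ plus the doubled tree as a collection of closed walks: since $f$ is a circulation, the in-degree equals the out-degree at every vertex in the arc-multiset $\{$arcs used by $f\} \cup \{$both orientations of each $T$-edge$\}$, so this multiset decomposes into closed walks; and since $T$ already connects everything within each of its $k$ components, each component yields a single closed walk (or the walks within a component can be spliced), giving $k' \le k$ closed walks spanning $V$. (5) Bound the cost: $c(T) \le s\sum_a c(a)x(a)$ accounts for the doubled tree edges (the factor $2$ appearing because each tree edge is used in both directions and $c(uv) = \min\{c(u\to v), c(v\to u)\} $, so two directed copies cost at most... actually the factor $2\alpha$ on the $x$-part and $s$ — no wait, $2$ on the tree — I'd track this: the cost of $f$ on non-tree arcs is at most $\alpha \sum_a c(a) x(a)$ after accounting for both directions hidden in $z$, and the doubled tree costs at most $2 c(T) \le 2s\sum c(a)x(a)$; combining gives the claimed $(2\alpha + s)\sum_a c(a)x(a)$ up to how one distributes the constants — I would set up the capacities so the bookkeeping yields exactly $2\alpha + s$).

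The main obstacle I expect is step (2) combined with getting the constant exactly right in step (5): one must choose the capacity function so simultaneously (a) Hoffman's cut condition is implied by thinness in both arc directions, (b) the resulting circulation has cost bounded by $\alpha \sum c(a) x(a)$ rather than something larger, and (c) the integrality is preserved. The subtlety is that thinness is stated in terms of the \emph{undirected} symmetrization $z$ and the cut $\delta(U)$, whereas Hoffman's condition is inherently directed and asymmetric ($l(\delta^-(U)) \le u(\delta^+(U))$); reconciling these requires putting the "slack" capacity on the right side of each cut, and the natural way to do it is to let $u$ on arc $a$ be roughly $f_0(a) + \alpha x(a)$ where $f_0$ is the forced tree flow, so that scaled-$x$ plus the tree flow is itself a feasible point. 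I would also need to handle the degenerate case $k'<k$ (two tree components getting merged by the circulation arcs) — but that only helps, since fewer walks is fine. Everything else is routine.
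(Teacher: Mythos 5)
Your construction forces \emph{both} orientations of every tree edge to carry one unit of flow (lower and upper bound $1$ on both $u\to v$ and $v\to u$). This is where the argument breaks for the \emph{asymmetric} problem. The thinness hypothesis only bounds $c(T)=\sum_{uv\in T}\min\{c(u\to v),c(v\to u)\}\le s\sum_a c(a)x(a)$; the reverse orientation of a tree edge can be arbitrarily more expensive than the forward one, and nothing in the hypotheses controls $\sum_{uv\in T}\bigl(c(u\to v)+c(v\to u)\bigr)$. So the ``doubled tree'' part of your circulation has no cost bound at all, and your step (5) cannot be repaired by bookkeeping --- you even run into exactly this wall mid-sentence and leave it unresolved. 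A second symptom that the construction is wrong: once both orientations are forced at exactly $1$, the doubled tree is already a circulation by itself, so Hoffman's cut condition is satisfied trivially and thinness is never actually used for feasibility. Thinness must enter the feasibility argument, because its entire purpose here is to certify that a \emph{cheap} balancing flow exists.

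The paper's proof instead puts each tree edge into $\vec T$ in its \emph{cheaper} direction only, with $l=1$ there and $l=0$ elsewhere, and sets $u(a)=l(a)+2\alpha x(a)$ on every arc. Now $\vec T$ alone is not balanced, and Hoffman's condition $l(\delta^-(U))\le u(\delta^+(U))$ is verified via $|T\cap\delta(U)|\le\alpha z(\delta(U))=\alpha\bigl(x(\delta^+(U))+x(\delta^-(U))\bigr)=2\alpha x(\delta^+(U))\le u(\delta^+(U))$, using flow conservation of $x$; the resulting integral circulation contains $\vec T$, is Eulerian, and costs at most $c(\vec T)+2\alpha\sum_a c(a)x(a)\le(2\alpha+s)\sum_a c(a)x(a)$. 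The return flow for the tree arcs is routed through arcs supported by $x$, whose cost is what the LP pays for --- that is the idea your proposal is missing. (Your other instincts --- rounding/scaling $u$ to force integrality, which would also wreck the cost bound, versus simply invoking the integrality clause of Hoffman's theorem for the given $l,u$; and the observation that components merging only helps --- are secondary to this main issue.)
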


\begin{proof}
Define a set $\vec{T} \subset A$ as follows.
For every $\{u,v\} \in T$, if $c((u,v)) \leq c((v,u))$, then we add $(u,v)$ to $\vec{T}$, and otherwise we add $(v,u)$ to $\vec{T}$.

Define functions $l,u:A \to \mathbb{R}^+$, such that for any $a\in A$,
\begin{align*}
l(a) &= \left\{ \begin{array}{ll}
1 & a\in \vec{T}\\
0 & a\notin \vec{T}
\end{array} \right. ,
\end{align*}
and
\begin{align*}
u(a) &= \left\{ \begin{array}{ll}
1 + 2 \alpha x(a) & a\in \vec{T}\\
2 \alpha x(a) & a\notin \vec{T}
\end{array} \right. .
\end{align*}

We argue that there exists a circulation $f:A \to \mathbb{R}^+$, such that for any $a\in A$, $l(a)\leq f(a) \leq u(a)$.
To that end, it suffices to verify conditions (1), and (2) of theorem \ref{thm:hoffman_circulation}.
For any $a\in A$ we have $l(a) = u(a) - 2\alpha x(a) \leq u(a)$, and therefore condition (1) is satisfied.
To verify condition (2), consider a set $U\subseteq V$.
Let $z$ be the symmetrization of $x$.
We have
\begin{align}
l(\delta^-(U)) &= |\vec{T} \cap \delta^-(U)| \label{eq:hoff1}\\
 &\leq |T\cap \delta(U)| \label{eq:hoff2}\\
 &\leq \alpha z(\delta(U)) \label{eq:hoff3}\\
 &= \alpha (x(\delta^+(U)) + x(\delta^-(U))) \label{eq:hoff4}\\
 &= 2\alpha x(\delta^+(U)) \label{eq:hoff5}\\
 &\leq 2\alpha u(\delta^+(U)) \label{eq:hoff6},
\end{align}
where \eqref{eq:hoff1} follows by the definition of $l$, \eqref{eq:hoff2} by the definition of $\vec{T}$, \eqref{eq:hoff3} by the thinness of $T$, \eqref{eq:hoff4} by the definition of $z$, \eqref{eq:hoff5} by flow conservation, and \eqref{eq:hoff6} by the definition of $u$.
This shows that condition (2) of theorem \ref{thm:hoffman_circulation} is satisfied, and therefore there exists a circulation $f$ as required.
Moreover, since $l$, and $u$ are integered-valued, $f$ can be chosen to be integer-valued.

We define the directed multigraph $\vec{H}$ containing an arc $(v,w)$ for every unit of flow on $(v,w)$ in $f$.
By flow conservation, $\vec{H}$ is Eulerian.
Moreover, $\vec{H}$ contains $\vec{T}$ as a subgraph.
Since $\vec{T}$ has $k$ weakly-connected components, it follows that $\vec{H}$ contains $k' \leq k$ strongly-connected components.
Therefore, $\vec{H}$ can be decomposed into $k'$ closed walks $C_1,\ldots,C_{k'}$ spanning $\dG$, as required.

It remains to bound the total cost of these walks.
%For any $i\in \{1,\ldots,k'\}$, suppose that $C_i$ visits the vertices $v_{i,0},\ldots,v_{i,k_i-1}$, in this order (and possibly with multiplicities).
We have
\begin{align*}
\sum_{i=1}^{k'} c(C_i) &= \sum_{a\in A} c(a) \cdot f(a)\\
 &\leq \sum_{a\in A} c(a) \cdot u(a)\\
 &= c(T) + 2\alpha \sum_{a\in A} c(a) \cdot x(a)\\
 &= (2\alpha + s) \sum_{a\in A} c(a) \cdot x(a),
\end{align*}
concluding the proof.
\end{proof}

\section{The algorithm}\label{sec:algorithm}

We are now ready to prove our main result.
%Let $(\dG,c)$ be an instance of ATSP, with $\eg(\dG)=g$, let $x$ be a feasible solution to the Held-Karp relaxation of $(\dG,c)$, and let $z$ be the symmetrization of $x$.
The first ingredient in our algorithm is a procedure which computes a $(O(1),O(1))$-thin forest $T$, having at most $g$ connected components.
This is done in Lemma \ref{lem:thin2}.
The second ingredient is a procedure which given this forest $T$, outputs a collection of at most $g$ walks that visit all vertices, and with  total cost at most a constant factor times the cost of the fractional solution $x$, i.e.~at most $O(1) \cdot \sum_{a\in A} c(a) \cdot x(a)$.
This is done in Lemma \ref{lem:circulation}.
Using these Lemmas as subroutines, we can obtain an approximation algorithm for ATSP.

\begin{proof}[Proof of Theorem \ref{thm:main}]
%Let $(\dG, c)$ be an instance of ATSP, where $\eg(G) = g$.
Let $\OPT\geq 0$ be the minimum cost of a solution to $(\dG, c)$.
We first compute an optimal solution $x$ to the Held-Karp LP relaxation of $(\dG, c)$.
Using lemma \ref{lem:thin2}, we compute in polynomial time a $(O(1),O(1))$-thin (w.r.t.~$x$) spanning forest $T$ in $G$, such that $T$ has at most $g$ connected components.
Using lemma \ref{lem:circulation}, we compute in polynomial time a collection of $k$ closed walks $C_1,\ldots,C_k$, with $k\leq g$, such that every vertex in $G$ is visited by some walk, and the total cost of all walks is at most $O(\OPT)$.
For every $i\in \{1,\ldots,k\}$, pick a vertex $v_i$ visited by $C_i$.
We create a new instance $(\dG', c')$ of ATSP induced on $\{v_1,\ldots,v_k\}$.
More precisely, we define the directed graph $\dG'=(V',A')$, with $V' = \{v_1,\ldots,v_k\}$, and for any $u,v \in V'$, we have an arc $u \rightarrow v$ in $A'$, with $c'(u \rightarrow v) = \min_{P} c(P)$, where $P$ ranges over all directed paths form $u$ to $v$ in $\dG$.
Clearly, the instance $(\dG', c')$ has a solution of cost $\OPT' \leq \OPT$.
Using the $f(n)$-approximation algorithm for $n$-vertex graphs, we find a closed tour $C$ in $\dG'$, of cost at most $\OPT \cdot f(k)$.
By composing $C$ with the closed walks $C_1,\ldots,C_k$, and shortcutting appropriately (as in \cite{DBLP:journals/networks/FriezeGM82}), we obtain a solution for $(\dG, c)$, of total cost $O(\OPT) + \OPT \cdot f(k) = O(\OPT \cdot f(k))$, as required.
\end{proof}

We can similarly show that in time exponential in the genus, we can obtain a constant-factor approximation.
This is shown in the following.

\begin{proof}[Proof of Theorem \ref{thm:FPT}]
We proceed exactly as in the proof of Theorem \ref{thm:main}.
The only difference is that once we compute the graph $\dG'$, instead of running an approximation algorithm on the instance defined by $\dG'$, we now find an \emph{exact} solution.
This can be done in time $2^{O(|V'|)} \cdot n^{O(1)} = 2^{O(g)} \cdot n^{O(1)}$, via a (folklore) dynamic program.
By composing this optimum solution with the $k$ closed walks $C_1,\ldots,C_k$, we obtain a solution for the original instance, of total cost at most $O(\OPT) + \OPT = O(\OPT)$, as required.
\end{proof}

\bibliography{bibfile}
\bibliographystyle{newuser}

\end{document}